\theoremstyle{plain}
\newtheorem{theorem}{Theorem}[section]
\newtheorem{lemma}[theorem]{Lemma}
\newtheorem{proposition}[theorem]{Proposition}
\theoremstyle{definition}
\newtheorem{definition}[theorem]{Definition}
\theoremstyle{remark}
\newcommand{\vect}[1]{\boldsymbol{#1}}
\providecommand{\detail}[1]{#1}
\renewcommand{\detail}[1]{}
\begin{document}

\title{Covariant Lyapunov vectors as global solutions of a partial
  differential equation on the phase space}

\author{
  Massimo Marino\textsuperscript{a} and Doriano
    Brogioli\textsuperscript{b,*}}

\date{\begin{flushleft}
\textsuperscript{a}Liceo Scientifico Statale ``A. Volta'', via
Benedetto Marcello 7, 20124 Milan, Italy
\\ \vspace{0.5cm}
\textsuperscript{b}Universit\"at Bremen, Energiespeicher-- und
Energiewandlersysteme, Bibliothekstra{\ss}e 1, 28359 Bremen,
Germany
\\ \vspace{0.5cm}
* brogioli@uni-bremen.de
\end{flushleft}}

\maketitle

\begin{abstract}
As a new tool to describe the behaviour of a dynamical system, we
introduce the concept of ``covariant Lyapunov field'', i.e.\ a field
which assigns all the components of covariant Lyapunov vectors at
almost all points of the phase space.  We focus on the case in which
these fields are overall continuous and also differentiable along
individual trajectories. We show that in ergodic systems such fields
can be characterized as the global solutions of a differential
equation on the phase space. Due to the arbitrariness in the choice of
a multiplicative scalar factor for the Lyapunov vector at each point
of the phase space, this differential equation exhibits a gauge
invariance that is formally analogous to that of quantum
electrodynamics. Under the hypothesis that the covariant Lyapunov
field is overall differentiable, we give a geometric interpretation of
our result: each 2-dimensional foliation of the space that contains
whole trajectories is univocally associated with a Lyapunov exponent,
and the corresponding covariant Lyapunov field is one of the
generators of the foliation.  In order to show with an example how
this new approach can be applied to the study of concrete dynamical
systems, we display an explicit solution of the differential equations
that we have obtained for the covariant Lyapunov fields in a model
involving a geodesic flow.
\end{abstract}

\vspace{0.5cm}

Keywords: Lyapunov exponents; Lyapunov vectors; gauge invariance;
global differential equation.

\vspace{0.5cm}

\section{Introduction}

Lyapunov exponents (LE) quantify the rate of divergence of
trajectories in a dynamical system~\cite{pikovsky}.  LEs give a deep
characterization of dynamical systems and have proved to be an
invaluable tool for the analysis of chaotic systems and attractors,
either in numerical calculations or experimental data~\cite{wolf1985}.
In particular, they provide the Kolmogorov-Sinai
entropy~\cite{pesin1977} and the Kaplan-Yorke dimension of an
attractor~\cite{frederickson1983}, which is an upper bound for the
information dimension of the system.

The divergence rate of trajectories depends both on the trajectory and
on the starting displacement. Different initial displacement vectors
give rise to the different observed LEs. 
Numerical procedures to calculate the LE have been known for a long
time~\cite{benettin1980, shimada1979}, and these procedures provide
as a by-product also a set of displacement
vectors, each associated with a different LE.
However, such vectors generally depend on
the chosen metric, thus they are not a characteristic of the
dynamical system. An intrinsic characterization of the system is
instead given by a suitable choice of such vectors, which are called
``covariant Lyapunov vectors'' (CLVs)~\cite{pikovsky, kuptsov2012}.

Various methods have recently been developed with the aim of
numerically calculating the CLVs; a discussion can be found in
Ref.~\cite{ginelli2013}.  \detail{``Static'' methods are summarized in
  section 6, the paper is mainly on the ``dynamic'' method.} They are
divided into the so-called ``static''~\cite{wolfe2007, kuptsov2012}
\detail{they require the calculation of the FOLVs and BOLVs}
\detail{Eq.~(41) of Kuptsov-Parlitz} and ``dynamic''
methods~\cite{ginelli2007}.  \detail{they aim at directly finding the
  CLVs} With the aid of such methods, the CLVs have been evaluated and
used as a diagnostic tool in various systems, e.g.\ in spatially
extended dynamical systems exhibiting chaos~\cite{szendro2007}, with
hyperbolic chaotic dynamics~\cite{kuptsov2009}, in large chaotic
systems consisting of globally coupled maps~\cite{takeuchi2009}, in
stationary systems out of equilibrium~\cite{bosetti2010}, and in the
phase synchronization transition of chaotic
oscillators~\cite{ouchi2011}.

Since divergence rates are calculated by
using the linearized dynamics, LEs actually depend on the direction of
the initial displacement vector and not on its norm. For this reason,
in the mathematically-oriented
literature each LE is associated, rather than to a single CLV,
to a one-dimensional subspace (or multi-dimensional subspace
in case of degeneracy) of the tangent space at each point of the phase
space.
 In this paper we adopt a new point of view, and show that the concept of CLV
 can be useful not only to perform numerical calculations, but
also within the framework of a formal mathematical treatment of the
 subject. To this purpose, we
will introduce in the next section a new definition of CLVs
which does not aim at providing a recipe for their calculation, but rather
highlights in a simple way their characteristic mathematical property.
Despite its formal novelty, such a definition is of course
fully consistent with the practical use that has
been made of CLVs in the existing literature. 
Starting from this definition, in
section \ref{sect:CLF} we define the fundamental concept of our new
approach, that of ``covariant Lyapunov field'' (CLF), i.e.\ a vector
field whose value represents a CLV, corresponding to a given LE, at
all points of its domain, which can in general be assumed to be a
subset of the phase space.  

We focus on systems in which there exist
CLFs which are continuous on their domain and differentiable along
individual trajectories. We show that under these hypotheses they can
be characterized as the global solutions of a differential equation on
the phase space.  We discuss the main properties of this equation and
we show how it leads to a geometrical interpretation of the role of
CLVs in a dynamical system.  In order to provide an explicit example
of application of our results, we show that the equation we have
obtained is actually satisfied be the CLFs in the Hadamard-Gutzwiller
model and allows us to calculate the CLVs by means of a simple
symbolic calculation.

\section{Preliminaries}

In this section, we summarize the fundamental knowledge on the topic
available in the literature, starting with the definition of Lyapunov
exponents (LEs)~\cite{pikovsky}.  Given a vector field
$\vect{F}(\vect{x})$ on an $n$-dimensional Riemannian manifold $X$,
let us consider a trajectory $\vect{x}(t)$, satisfying the
differential equation
\begin{equation}
  \frac{\mathrm{d}}{\mathrm{d}t} x^{\mu}(t) = 
  F^{\mu}\left[\vect{x}(t)\right] 
  \label{eq:xp}
\end{equation}
for $\mu=1,\dots, n$.
Then for a slightly displaced trajectory $\vect{x}(t)+\vect{\delta x}(t)$
we get at first order in $\vect{\delta x}(t)$ the equation
\begin{equation}
  \frac{\mathrm{d}}{\mathrm{d}t} \delta x^{\mu}(t) = 
  \partial_\nu F^{\mu}\left[\vect{x}(t)\right] \delta x^{\nu}(t) \, ,
  \label{eq:dxp}
\end{equation}
where $\partial_j$ is the partial derivative with respect to the
$j$-th coordinate and the terms are summed over the repeated Greek indices
(Einstein notation). 

It is typically observed that
the norm of $\vect{\delta x}(t)$,
asymptotically for $t\to +\infty$, behaves exponentially:
\begin{equation}
 \lVert \vect{\delta x}(t) \rVert =
 e^{\lambda^+ t + o(t)}  \, ,
   \label{eq:approx}
\end{equation}
where $o(t)/t$ vanishes in the limit $t\to +\infty$. The real parameter
\[
\lambda^+= \lim_{t\to +\infty} \frac{\ln \lVert \vect{\delta x}(t) \rVert}{t}
\]
is called forward Lyapunov exponent of the 
displacement vector $\vect{\delta x}_0=\vect{\delta x}(0)$
at the point $\vect{x}_0=\vect{x}(0)$. Under
quite general hypotheses it can be proved  that the value of
the LE does not depend on the particular choice of the metric tensor
used on the manifold $X$ for the calculation of the norm appearing in
Eq.~(\ref{eq:approx}).

If $\mu$ is a measure on $X$ which is preserved by the flow generated by
$\vect{F}(\vect{x})$, Oseledec's theorem \cite{Oseledec1968} says that,
at almost all points
$\vect{x}_0\in X$ with respect to the measure $\mu$,
for any tangent vector $\vect{\delta x}_0$ there exists a real number
$\lambda^+$ for which Eq.~(\ref{eq:approx}) holds.
It is easy to see that, given $\vect{x}_0$, there can be at most $n$
distinct forward LEs $\lambda_j^+$ as a function of $\vect{\delta
  x}_0$, thanks to the linearity of Eq.~(\ref{eq:dxp}) in
$\vect{\delta x}(t)$. For a rigorous discussion of the
conditions for the existence of the LEs we refer the readers to
Ref.~\cite{eckmann1985}.

The whole set of forward LEs $\lambda_j^+$, for a given $\vect{x}_0$,
can be calculated by suitable numerical procedures~\cite{benettin1980,
  shimada1979} which, as a by-product, also return a set of $n$
so-called forward ``orthonormal Lyapunov vectors'' (OLVs): each
forward LE $\lambda_j^+$ is obtained by taking one of the forward OLVs
as initial displacement vector $\vect{\delta x}_0$. As the name
suggests, the OLVs form an orthonormal set with respect to the chosen
metric tensor on $X$. However, at variance with the LEs, these OLVs do
depend on the arbitrary choice of such a metric tensor, and for this
reason they do not represent an intrinsic characterization of the
dynamical system.

A different point of view arises when the forward dynamics is compared
with the backward dynamics~\cite{pikovsky}. \detail{page 56} In
analogy with Eq.~(\ref{eq:approx}), an exponential behaviour of the
displacement vector is also observed looking at the evolution back in time,
for $t\to -\infty$:
\begin{equation}
  \lVert \vect{\delta x}(t) \rVert =
e^{\lambda^- |t|+o(t)} \, ,
  \label{eq:approx-}
\end{equation}
where $\lambda^-$ represents the backward LE and $o(t)/t$ vanishes for
$t\to -\infty$.  As for the forward LEs $\lambda_j^+$, there can be at
most $n$ distinct backward LEs $\lambda_j^-$ as a function of
$\vect{\delta x}_0$.  In general, there is no relation between the
forward and backward LEs.  However, in this work we are considering
systems with a preserved measure $\mu$; then, for almost every initial
position $\vect{x}_0$ with respect to $\mu$, \detail{theorem 3.1, page
  283} the forward and backward LEs, $\lambda_j^+$ and $\lambda_j^-$,
are opposites, and there exist initial displacement vectors
$\vect{\delta x}_0$ giving rise to these opposite LEs
~\cite{ruelle1979}. This implies that
\begin{equation}
  \lVert \vect{\delta x}(t) \rVert =
e^{\lambda_j t + o(t)} \, ,
  \label{eq:approx0}
\end{equation}
where  
\begin{equation}\label{covlyexp}
\lambda_j :=  \lambda_j^+ = -\lambda_j^- 
\end{equation}
and $o(t)/t$ vanishes for both $t\to +\infty$ and $t\to -\infty$.

The displacement vectors giving rise to Eq.~(\ref{eq:approx0}) are called
covariant (or characteristic) Lyapunov vectors with LEs $\lambda_j$
defined by Eq.~(\ref{covlyexp}).  \detail{theorem 3.1,
  page 283} \detail{also in \cite{pikovsky}, eq. 4.4 page 57 and
  \cite{kuptsov2012} Eq. 39}  Consistently with these results we will
adopt the following general definition.

\begin{definition}[{\bf covariant Lyapunov vector}]\label{def:clv}
  Let $\vect{\delta x}(t)$ be the solution of Eq.~(\ref{eq:dxp}) with
  initial data $\vect{\delta x}(0)=\vect{v}$, where $\vect{v}$ is a
  tangent vector at a point $\vect{x}_0\in X$. We say that $\vect{v}$
  is a ``covariant Lyapunov vector'' (CLV) at the point $\vect{x}_0$ if
  \begin{equation}\label{eq:limits}
    \lim_{t\to -\infty} \frac{\ln \lVert \vect{\delta x}(t) \rVert}{t} =
    \lim_{t\to +\infty} \frac{\ln \lVert \vect{\delta x}(t) \rVert}{t} \, .
  \end{equation}
  The common value $\lambda$ of the two above limits is called the
  ``Lyapunov exponent of the CLV $\vect{v}$''.
\end{definition}

According to the above definition, at a given point $\vect{x}_0$ of 
the phase space,
the CLVs corresponding to each $\lambda_j$ form a linear space of
dimension $\nu_j$, called the multiplicity of the LE $\lambda_j$, and
the sum of the multiplicities of all the LEs equals the dimension $n$
of the space $X$.  The linear space corresponding to each LE
$\lambda_j$ is independent of the particular metric which is used on
the space $X$. Hence these spaces provide a splitting of the tangent
space at almost every point of $X$ (the so called ``Oseledec
splitting'') which represents an intrinsic characterization of the
dynamical system.  The possible presence of multiplicities larger than
1, called degeneration, is often neglected in the literature, e.g.\ in
Ref.~\cite{ginelli2007}. \detail{see note 10} In the absence of
degeneration, one might say that
there is a single CLV $\vect{v}_j$, defined up to an
arbitrary scalar factor, for each LE $\lambda_j$ with $1\le j\le n$.

It is easy to see that, if $\vect{v}=\vect{\delta x}_0$ is a CLV at a
point $\vect{x}_0$ with LE $\lambda$, then the vector $\vect{\delta
  x}(t)$, evolving from $\vect{\delta x}_0$ according to
Eqs.~(\ref{eq:xp}) and (\ref{eq:dxp}), is for any $t$ a CLV at the
point $\vect{x}(t)$ with the same LE $\lambda$.  This property is
expressed by saying that the CLVs are ``invariant under the linearized
flow''~\cite{wolfe2007}, or that ``CLVs are mapped to other CLVs by
the linear propagator along trajectories''~\cite{noethen_thesis}.
This fact implies that, for a given real number $\lambda$, the set
$D\subseteq X$, of all the points at which $\lambda$ is a LE, is
invariant under the evolution of the system.

\section{Definition of covariant Lyapunov fields}\label{sect:CLF}

In this section we will introduce the covariant Lyapunov fields which, as we
already mentioned in the Introduction, are
the fundamental mathematical entity that constitutes
the original subject of our investigation.

Let us suppose that, 
for a given LE $\lambda$, a particular CLV $\vect{v}(\vect x)$
has been fixed in the tangent space at every point $\vect x$ of the
invariant set $D\subseteq X$ in which $\lambda$ is a LE.
In the absence of degeneration, fixing such a CLV 
at a point $\vect x \in D$ amounts 
to choosing a vector with a suitable norm inside the one-dimensional
subspace associated with $\lambda$. Once a CLV
has been fixed at each point of $D$, we have obtained
a vector field $\vect{v}$ on $D$ associated with $\lambda$. The aim 
of this paper is to investigate some general properties of such a 
vector field. 

\begin{figure*}
  \includegraphics{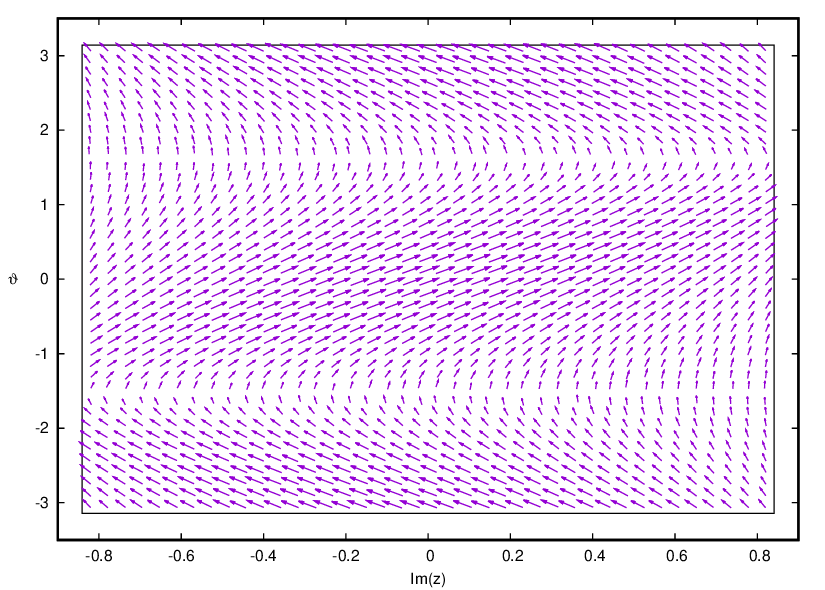}
  \caption{Covariant Lyapunov vectors of the geodesic flow on a genus-2
    hyperbolic surface of constant negative curvature. The graph shows the section
    at $\Re{z}=0$, with coordinates $\Im{z}$, $\vartheta$.
    The arrows are the projections on the section of the
    covariant Lyapunov vectors with maximum exponent (normalized with
    respect to the Euclidean norm). }
  \label{fig:hyperbolic}
\end{figure*}

We focus on continuous fields. In order to get an intuitive idea of
the continuity of these fields, in Fig.~\ref{fig:hyperbolic} we report
an example referring to the geodesic flow on a genus-2
hyperbolic surface of constant negative curvature (Hadamard-Gutzwiller
model)~\cite{braun2002}, an example of Anosov flow. It is constructed
by taking a regular hyperbolic octagon inside the Poincar\'e disc
and identifying its opposite sides. The dynamic variables are $z$,
representing the position on the complex plane, and $\vartheta$, the
angle formed by the tangent to the geodesic and the real axis. Further
detail on this model is given in section~\ref{sect:hyperbolic}.

To obtain the graph, the evolution of $\vect{x}=(\Re{z}, \Im{z},
\vartheta)$ was calculated by means of numerical integration.  The CLV
was calculated by starting with an arbitrary $\vect{\delta x}_0$ at a
point $\vect x_0$, and letting it evolve according to Eq.~(\ref{eq:dxp})
for a long time: the vector eventually approaches the CLV with the
largest LE $\lambda$. In Fig.~\ref{fig:hyperbolic} we show 
the projection of this CLV,
normalized with respect to the Euclidean norm, on the section
at $\Re{z}=0$. As one can see, the behaviour of the field appears to
be continuous everywhere on the section.

\begin{figure*}
  \includegraphics{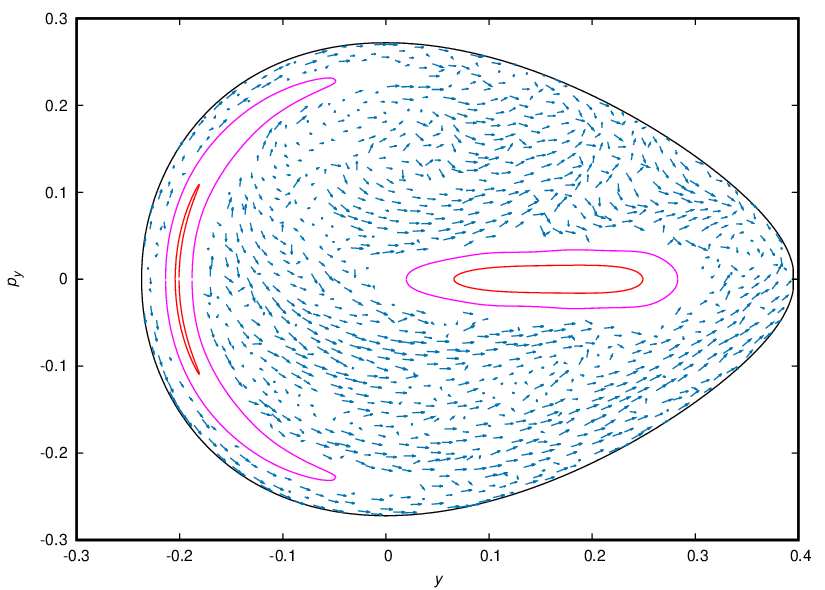}
  \caption{Covariant Lyapunov vectors of the H\'enon-Heiles system.
    The graph shows the Poincar\'e section at $x=0$ with coordinates
    $y$, $p_y$, for energy $H=0.037$ and $\Lambda=2$. The black line
    is the contour of the constant energy surface. The magenta and red
    lines are invariant tori. The arrows are the projections on the
    section of the covariant Lyapunov vectors with maximum exponent
    (normalized with respect to the Euclidean norm). }
  \label{fig:henon:heiles}
\end{figure*}

The continuity of the CLVs in other dynamical systems is more
questionable. As an example, we report in Fig.~\ref{fig:henon:heiles}
the CLV with the largest LE $\lambda$ on the Poincar\'e section of the
Henon-Heiles system~\cite{henon1964}, having the following
Hamiltonian:
\begin{equation}
  H=\frac{1}{2}\left(p_x^2+p_y^2\right) +
    \frac{1}{2}\left(x^2+y^2\right) +
    \Lambda\left(x^2y-\frac{y^3}{3}\right)\,.
\end{equation}
The CLV was calculated as in the case of the Anosov flow above.  The
CLV are not calculated in the regions covered by invariant tori;
moreover, we know that the CLV cannot be defined on the homoclinic and
heteroclinic points. In spite of these facts, in wide regions of the
graph in Fig.~\ref{fig:henon:heiles} the vectors tend to be aligned
along flow lines and to smoothly change with position.

Aside from the above examples,
there are also general reasons which support 
the hypothesis of continuity.
We have already pointed out that a CLV evolves with time,
according to Eqs.~(\ref{eq:xp}) and (\ref{eq:dxp}), into other
CLVs with the same LE. Assuming that the field $\vect{F}$ has
a smooth behaviour, this shows that one can define the
vector field $\vect{v}$  in such a way that it is continuous and differentiable
at least along individual trajectories. Moreover,
the numerical calculation of
the CLVs requires that also their dependence on the position $\vect{ x}$ 
in the phase space is continuous, at least in some domain
$D$. Indeed, numerical calculations are always based on approximation
of real numbers with truncated binary representations: in order to
be meaningful, 
the represented relations must be at least continuous.

Rigorous results on continuity and differentiability are available in
a related field: the differentiability of Anosov splitting has been
extensively studied in the context of geodesic flows. In some of these
studies, differentiability of class $\mathcal{C}^{\infty}$ 
or even $\mathcal{C}^2$  is declared to be a rare
property~\cite{fang2005, benoist1992}, due to the connection with a
quite strict necessary and sufficient condition~\cite{kanai1988,
  benoist1992}. A sufficient condition for having class
$\mathcal{C}^1$ is also known~\cite{benoist1992, hirsch1969,
  hirsch1975}, but no necessary conditions. Summarizing, theorems
prove the differentiability only in a few cases, but the continuity is
usually assumed to hold.

We formalize the above considerations by introducing the following
definition.

\begin{definition}[{\bf covariant Lyapunov field}]\label{def:lvf}
  Let $\vect{F}(\vect{x})$ be a vector field on the Riemannian
  manifold $X$, and let $D\subseteq X$ be an invariant set with
  respect to the evolution generated by $\vect{F}$. Let
  $\vect{v}(\vect{x})$ be a continuous vector field on $D$, such that 
the function $\lVert \vect{v}(\vect{x})\rVert$ is differentiable.
  If, at every point $\vect{x}\in D$, $\vect{v}(\vect{x})$ is a CLV
  with LE $\lambda$ independent of $\vect{x}$, then we say that
  $\vect{v}(\vect{x})$ is a ``covariant Lyapunov field" (CLF) on the
  domain $D$ with LE $\lambda$.
\end{definition}

Since CLVs are defined up to a multiplicative factor, 
it is clear that, given any vector field $\vect{v}(\vect{x})$
satisfying the above definition, it is always possible to
redefine it in such a way that one has everywhere
$\lVert \vect{v}(\vect{x})\rVert = 1$. 
It might then seem reasonable
to include such a condition directly into
 definition \ref{def:lvf}, thus automatically ensuring that 
the function $\lVert \vect{v}(\vect{x})\rVert$ is differentiable.
One has however to keep in mind that a
generic phase space is not equipped with 
any intrinsic metric, and the value of LEs is independent of
the norm used in Eq.~(\ref{eq:limits}). Thus there is no basis
for considering CLFs only those vector fields
which are normalized with respect to a particular norm.

Rigorously speaking, 
if $\nu$ is the multiplicity of the LE $\lambda$, one may
arbitrarily choose, at each point of $D$, $\nu$ CLVs which form a
basis of the corresponding linear subspace. 
We then see that a suitable choice of $\nu$ linearly independent CLVs,
at each point of $D$, determines $\nu$ linearly independent CLFs
associated with the LE $\lambda$.

\section{A global equation characterizing the covariant Lyapunov fields}

The aim of this section is to establish an important characteristic property
of CLFs, namely that of being 
the global solutions of a particular differential equation on their domain. 
The following lemma deals with a differential
equation which, owing to its formal analogy with Eq.~(\ref{eq:dxp}),
will be an essential tool for reaching this goal.  An elementary but
important property of this equation, which is highlighted in the
lemma, is the existence of a simple transformation relating different
solutions with one another.

\begin{lemma}\label{lemma0}
Let $\vect{F}(\vect{x})$ be a differentiable vector field 
on a manifold $X$, and let $\gamma$
be a trajectory defined by a
  function $\vect{x}(t)$ satisfying Eq.~(\ref{eq:xp}) for
  $-\infty<t<+\infty$.
Let us suppose that $\vect v(t)$ and $b(t)$
are respectively a vector and a scalar function satisfying on the
trajectory $\gamma$ the differential equation
 \begin{equation}
    \frac{\mathrm{d}}{\mathrm{d}t} v^{\mu}(t) =
    \partial_\nu F^{\mu} [\vect x(t)]v^{\nu}(t) -   b(t) v^{\mu}(t)\,.
    \label{eq:traject0}
  \end{equation}  
Then, for any arbitrary nonvanishing smooth scalar function
$a(t)$, the vector function
\begin{equation}\label{eq:wgt}
\vect{v}'(t)=a(t)\vect{v}(t) 
\end{equation}
satisfies the equation
\begin{equation}
    \frac{\mathrm{d}}{\mathrm{d}t} v'^{\mu}(t) =
    \partial_\nu F^{\mu}[\vect x(t)] v'^{\nu}(t) -   b' (t)v'^{\mu}(t)
    \label{eq:traject1}
  \end{equation}  
with 
\begin{equation}\label{eq:cgt}
  b'(t)=b(t) - \frac{\mathrm{d}}{\mathrm{d} t} \ln\left|a(t)\right| 
  \, .
\end{equation}
\end{lemma}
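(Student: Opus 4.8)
The plan is to substitute \eqref{eq:wgt} directly into the left-hand side of \eqref{eq:traject1} and compute, using the product rule and the hypothesis \eqref{eq:traject0}. First I would write $\frac{\mathrm{d}}{\mathrm{d}t} v'^{\mu}(t) = \dot a(t)\, v^{\mu}(t) + a(t)\, \dot v^{\mu}(t)$, where the dot denotes $\mathrm{d}/\mathrm{d}t$. Then I would replace $\dot v^{\mu}(t)$ using \eqref{eq:traject0}, obtaining $\dot a\, v^{\mu} + a\bigl(\partial_\nu F^{\mu}[\vect x] v^{\nu} - b\, v^{\mu}\bigr)$. The middle term is exactly $\partial_\nu F^{\mu}[\vect x]\,(a v^{\nu}) = \partial_\nu F^{\mu}[\vect x]\, v'^{\nu}$, which already matches the first term on the right-hand side of \eqref{eq:traject1}. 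It remains to collect the terms proportional to $v^{\mu}$.

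The collected coefficient of $v^{\mu}(t)$ is $\dot a(t) - a(t) b(t)$, which I would rewrite as $-a(t)\bigl(b(t) - \dot a(t)/a(t)\bigr)$. Since $a(t)$ is nonvanishing and smooth, $\dot a(t)/a(t) = \frac{\mathrm{d}}{\mathrm{d}t}\ln|a(t)|$, so the coefficient equals $-a(t)\, b'(t)$ with $b'(t)$ given by \eqref{eq:cgt}. Factoring $a(t) v^{\mu}(t) = v'^{\mu}(t)$ then yields precisely the term $-b'(t) v'^{\mu}(t)$, completing the verification of \eqref{eq:traject1}.

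This is a direct computation with no real obstacle; the only point requiring a word of care is the step $\dot a/a = \frac{\mathrm d}{\mathrm d t}\ln|a|$, which is valid precisely because $a$ is assumed nonvanishing and smooth (so $|a|$ is differentiable and the logarithm is well defined on each interval where $a$ keeps a constant sign). I would state this explicitly so that the role of the nonvanishing hypothesis is transparent, and then conclude that $\vect v'(t)$ together with $b'(t)$ satisfies \eqref{eq:traject1}, as claimed.
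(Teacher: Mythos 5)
Your proposal is correct and follows essentially the same route as the paper's proof: differentiate $v'^{\mu}=a v^{\mu}$ by the product rule, substitute Eq.~(\ref{eq:traject0}), and identify the leftover term $\dot a\, v^{\mu}$ with $(\mathrm{d}\ln|a|/\mathrm{d}t)\,v'^{\mu}$. Your explicit remark on why $\dot a/a = \frac{\mathrm{d}}{\mathrm{d}t}\ln|a|$ requires $a$ nonvanishing is a welcome (if minor) addition of care.
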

\begin{proof}
From Eqs.~(\ref{eq:traject0}) and (\ref{eq:wgt}) it follows that
\begin{align*}
\frac{\mathrm{d}}{\mathrm{d}t} v'^{\mu}(t) &=
a(t)\frac{\mathrm{d}}{\mathrm{d}t} v^{\mu}(t)+
\frac{\mathrm{d}}{\mathrm{d}t}a(t) v^{\mu}(t) \\
&=\partial_\nu F^{\mu}[\vect x(t)] v'^{\nu}(t) -   b (t)v'^{\mu}(t)+
\frac{\mathrm{d}}{\mathrm{d}t} a(t) \frac {v'^\mu(t)}{a(t)} 
\end{align*}
from which Eqs.~(\ref{eq:traject1})--(\ref{eq:cgt}) are obtained.
\end{proof}

The following lemma shows that any CLF, associated with a
nondegenerate LE on an invariant domain $D$,
is the solution of a particular differential
equation of first order along any trajectory contained in $D$.

\begin{lemma}\label{lemma:eq_br}
  Let the vector field $\vect{F}(\vect{x})$ generate a flow on the
  Riemannian manifold $X$ according to Eq.~(\ref{eq:xp}), and let
  $\vect{v}(\vect{x})$ be a CLF on an invariant domain $D\subseteq
  X$, corresponding to a nondegenerate LE $\lambda$.  
Then, for any trajectory $\gamma\subseteq D$, defined by a
  function $\vect{x}(t)$ satisfying Eq.~(\ref{eq:xp}) for
  $-\infty<t<+\infty$, 
the function
  $\vect{v}\left[\vect{x}(t)\right]$ is differentiable with respect to
  the time $t$, and there exists on $D$ a scalar function 
$b(\vect{x})$  such that the differential equation
\begin{equation}
    \frac{\mathrm{d}}{\mathrm{d}t} v^{\mu}[\vect{x}(t)] =
    \left[\partial_\nu F^{\mu} v^{\nu} -    
b v^{\mu}\right]_{\vect{x}(t)}
    \label{eq:traject}
  \end{equation}  
  holds on $\gamma$. 
If $\lVert \vect{v}[\vect{x}(t)]\rVert$ and
$\lVert \vect{v}[\vect{x}(t)]\rVert^{-1}$ are both limited
on $\gamma$,
then the time average of $b$ over $\gamma$ is
  equal to the LE $\lambda$:
  \begin{equation}\label{eq:time_av}
    \lim_{t\to \pm\infty}\frac{\int_0^t b\left[\vect{x}(t')\right]{\mathrm{d}}t'}{t}
    =\lambda \, .
  \end{equation}
\end{lemma}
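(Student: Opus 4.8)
The plan is to pin down $\vect{v}$ along a trajectory by comparing it with a genuine solution of the linearized equation~(\ref{eq:dxp}), and then to read off $b$ as a logarithmic‑derivative discrepancy. Fix a trajectory $\gamma=\{\vect{x}(t):-\infty<t<+\infty\}\subseteq D$ with $\vect{x}(0)=\vect{x}_0$, and let $\vect{\delta x}(t)$ be the solution of~(\ref{eq:dxp}) with $\vect{\delta x}(0)=\vect{v}(\vect{x}_0)$. Since $\vect{v}(\vect{x}_0)$ is a CLV with exponent $\lambda$, the invariance of CLVs under the linearized flow (noted above) makes $\vect{\delta x}(t)$, for every $t$, a CLV at $\vect{x}(t)$ with the same exponent $\lambda$; because $\lambda$ is nondegenerate the CLV subspace at $\vect{x}(t)$ is one‑dimensional, so there is a scalar function $c(t)$ with $c(0)=1$ such that
\begin{equation}
  \vect{v}(\vect{x}(t))=c(t)\,\vect{\delta x}(t).
  \label{eq:plan-c}
\end{equation}

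First I would establish that $c$ is differentiable; this is the crux of the argument, and it also yields the asserted differentiability of $t\mapsto\vect{v}(\vect{x}(t))$. The solution $\vect{\delta x}(t)$ is differentiable (the coefficients $\partial_\nu F^\mu$ being continuous), it never vanishes, and so $\lVert\vect{\delta x}(t)\rVert$ is differentiable as well; by Definition~\ref{def:lvf} the restriction of $\lVert\vect{v}\rVert$ to $\gamma$ is differentiable, hence
\[
  |c(t)|=\frac{\lVert\vect{v}(\vect{x}(t))\rVert}{\lVert\vect{\delta x}(t)\rVert}
\]
is a differentiable, strictly positive function of $t$. On the other hand $c$ is continuous — from~(\ref{eq:plan-c}) it is a ratio of continuous quantities — and nowhere zero, hence of constant sign on the connected interval $-\infty<t<+\infty$; therefore $c=\pm|c|$ is differentiable.

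Differentiating~(\ref{eq:plan-c}) and substituting~(\ref{eq:dxp}), then cancelling $\vect{\delta x}(t)=\vect{v}(\vect{x}(t))/c(t)$, gives
\begin{equation}
  \frac{\mathrm{d}}{\mathrm{d}t}v^\mu(\vect{x}(t))
  =\partial_\nu F^\mu[\vect{x}(t)]\,v^\nu(\vect{x}(t))
  +\frac{\dot{c}(t)}{c(t)}\,v^\mu(\vect{x}(t)).
  \label{eq:plan-diff}
\end{equation}
So at every point of $\gamma$ the vector $\mathrm{d}\vect{v}/\mathrm{d}t-\partial_\nu F^\mu v^\nu$ is a multiple of the nonzero vector $\vect{v}$, and the coefficient $b(\vect{x}):=-\dot{c}/c$, read at the instant when $\vect{x}(t)=\vect{x}$, depends only on the value of $\vect{v}$ and on its derivative along $\vect{F}$ at $\vect{x}$; in particular it does not depend on the auxiliary choice of $\vect{\delta x}(0)$ (rescaling that vector only shifts $\ln|c|$ by a constant), so $b$ is a well‑defined scalar function on $D$, and~(\ref{eq:traject}) holds along $\gamma$ with this $b$. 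Since $\gamma$ was an arbitrary complete trajectory in $D$, this proves the first part of the lemma.

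Finally, for the time average I would integrate $b[\vect{x}(t')]=-\frac{\mathrm{d}}{\mathrm{d}t'}\ln|c(t')|$ from $0$ to $t$, use $c(0)=1$ and $|c(t)|=\lVert\vect{v}(\vect{x}(t))\rVert/\lVert\vect{\delta x}(t)\rVert$, and divide by $t$, obtaining
\begin{equation}
  \frac{1}{t}\int_0^t b[\vect{x}(t')]\,\mathrm{d}t'
  =\frac{\ln\lVert\vect{\delta x}(t)\rVert}{t}-\frac{\ln\lVert\vect{v}(\vect{x}(t))\rVert}{t}.
  \label{eq:plan-avg}
\end{equation}
As $t\to\pm\infty$ the first term tends to $\lambda$, because $\vect{\delta x}(0)=\vect{v}(\vect{x}_0)$ is a CLV with exponent $\lambda$ (Definition~\ref{def:clv}), while the second term tends to $0$, because the boundedness of $\lVert\vect{v}[\vect{x}(t)]\rVert$ and of its reciprocal on $\gamma$ keeps $\ln\lVert\vect{v}(\vect{x}(t))\rVert$ bounded. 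This gives~(\ref{eq:time_av}). The one delicate point in the whole argument is the differentiability of $c$ — specifically the constant‑sign step combined with the differentiability of $\lVert\vect{v}\rVert$ that is built into Definition~\ref{def:lvf}; everything else is bookkeeping.
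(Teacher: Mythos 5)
Your proof is correct and follows essentially the same route as the paper's: compare $\vect{v}[\vect{x}(t)]$ with the linearized-flow solution $\vect{\delta x}(t)$ sharing its initial value, use nondegeneracy to get a scalar ratio, deduce its differentiability from that of the two norms, and read off $b$ as minus the logarithmic derivative of that ratio, with the time average following from boundedness of $\lVert\vect{v}\rVert^{\pm 1}$. The only (harmless) differences are that you differentiate the relation directly rather than invoking the paper's Lemma~\ref{lemma0}, and you spell out slightly more explicitly why $b$ is a well-defined function on $D$.
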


\begin{proof}
  For a given trajectory $\gamma\subseteq D$, let $\vect{\delta x}(t)$
  be the solution of Eq.~(\ref{eq:dxp}) with initial condition
  $\vect{\delta x}(0)=\vect{v}(\vect{x}_0)$, where
  $\vect{x}_0=\vect{x}(0)$.  As we have already recalled, the property
  of being a CLV is maintained by the linearized flow, so
  $\vect{\delta x}(t)$ is for any $t$ a CLV with LE $\lambda$ at the
  point $\vect{x}(t)$.  On the other hand, by hypothesis also
  $\vect{v}\left[\vect{x}(t)\right]$ is for any $t$ a CLV with LE
  $\lambda$.  Since $\lambda$ is assumed to be a nondegenerate LE, the
  corresponding CLVs form at all the points of $D$ a 1-dimensional
  linear space, so there exists a nonvanishing real function $a(t)$
  such that
  \begin{equation}\label{eq:delta_x}
\vect{v}\left[\vect{x}(t)\right]=   a(t) \vect{\delta x}(t)
  \end{equation}
  at all the points of $\gamma$. Since $\vect{v}\left[\vect{x}(t)\right]$ and
    $\vect{\delta x}(t)$ are continuous 
  functions of $t$ and  $a(0)=1$, we get $a(t)>0$ for any $t$.
  For this reason, taking the norm of of Eq.~(\ref{eq:delta_x}) gives
  \begin{equation}\label{eq:delta_xn}
  \lVert \vect{v}\left[\vect{x}(t)\right] \rVert  
    =
    a(t) \lVert \vect{\delta x}(t) \rVert
 \, .
  \end{equation}
  Since both $\lVert \vect{\delta x}(t) \rVert$ and $\lVert
  \vect{v}\left[\vect{x}(t)\right] \rVert$ are differentiable
  functions of $t$, the above equation implies that also $a(t)$ is differentiable,
so from Eq.~(\ref{eq:delta_x}) we obtain that  $\vect{v}\left[\vect{x}(t)\right]$ 
is differentiable with respect to  the time $t$.

Eq.~(\ref{eq:dxp}) has the same form as Eq.~(\ref{eq:traject0}) with $b=0$.
Hence, by applying lemma \ref{lemma0}, we obtain from Eq.~(\ref{eq:delta_x}) 
  that Eq.~(\ref{eq:traject}) holds with
  \begin{equation}\label{eq:b}
    b\left[\vect{x}(t)\right]= 
    -\frac{\mathrm{d}}{\mathrm{d}t}\ln a(t)\, .
  \end{equation}
  The fact that $\vect{v}(\vect{x}_0)$ is a CLV with LE $\lambda$
  means that
\[
\lambda =\lim_{t\to \pm\infty} \frac{\ln \lVert \vect{\delta x}(t) \rVert}{t}\,.
\]
From Eq.~(\ref{eq:delta_xn}) we get $ \lVert \vect{\delta x}(t) \rVert=
\lVert \vect{v}\left[\vect{x}(t)\right]\rVert/a(t)$ and so
     \begin{equation}\label{eq:def:lambda:theo}
    \lambda =\lim_{t\to \pm\infty}\left(
    \frac{\ln \lVert \vect{v}\left[\vect {x}(t)\right] \rVert} {t} 
-\frac{\ln a(t)}{t}\right) \, .
  \end{equation}
  From Eq.~(\ref{eq:b}) and from $a(0)=1$ we get
  \begin{equation}\label{eq:sol_cauchy}
    \ln a(t)=-\int_0^t b\left[\vect{x}(t')\right]
        {\mathrm{d}}t' \, .
  \end{equation}  
  Moreover, if  $\lVert
  \vect{v}\rVert$ and $\lVert \vect{v}\rVert^{-1}$ are both limited on
  $\gamma$, then
  \[
  \lim_{t\to \pm\infty}\frac{\ln \lVert \vect{v}\left[\vect {x}(t)\right] \rVert}{t}=0\,,
  \]
so from Eq.~(\ref{eq:def:lambda:theo}) one obtains Eq.~(\ref{eq:time_av}).
\end{proof}

Note that Eq.~(\ref{eq:traject}) is formally similar to Eq.~(\ref{eq:dxp}), but it
includes an additional term involving a scalar function $b$.
As we have recalled in the preceding section,
the norm of a CLV evolving according to the tangent dynamics, 
i.e.\ as the displacement vector $\vect{\delta x}$ in Eq.~(\ref{eq:dxp}),
would increase (resp.\ decrease) exponentially with time if
the corresponding LE is positive (resp.\ negative). 
The term in the Eq.~(\ref{eq:traject}) containing
the scalar function $b$ has just the effect of compensating this increase (or decrease)
and making the time evolution compatible with the existence of a vector field
having a bounded norm everywhere. This exact compensation is the reason why
the time average of $b$, when the CLF has a bounded norm, just equals the value
of the LE $\lambda$, as shown by Eq.~(\ref{eq:time_av}). 

Our goal is now to exploit lemma \ref{lemma:eq_br}
to derive a global differential equation which characterizes CLFs for
ergodic systems. As a first step in this direction, we note that
Eq.~(\ref{eq:traject}) can be written in a more compact form by
making use of the concept of Lie derivative. If $\vect{v}$ is a
differentiable vector field as well as $\vect{F}$, then it is well known
that the Lie derivative $\mathcal{L}_{\vect{F}}\vect{v}$ 
of $\vect{v}$ with
respect to $\vect{F}$ is equal to the commutator of the two fields:
\begin{equation} \label{eq:alt:def:lie:derivative}
 \left( \mathcal{L}_{\vect{F}} \vect{v}\right)^{\mu}  =
\left[\vect{F},\vect{v}\right]^\mu =
  F^{\nu} \partial_{\nu} v^{\mu} - v^{\nu} \partial_\nu F^{\mu} \, .
\end{equation}
The first term on the right-hand-side of
Eq.~(\ref{eq:alt:def:lie:derivative}) represents the total derivative of
$\vect{v}$ with respect to time along a field line $\vect{x}(t)$ of 
$\vect{F}$ defined by Eq.~(\ref{eq:xp}):
\[
\left. F^{\nu}  \partial_{\nu} v^{\mu}\right |_{\vect{x}(t)}
=\frac{\mathrm{d}}{\mathrm{d}t} x^{\nu}(t)
\partial_{\nu} v^{\mu}\left[\vect{x}(t)\right] 
=\frac{\mathrm{d}}{\mathrm{d} t}
  v^{\mu}\left[\vect{x}(t) \right] \,.
\]
Hence Eq.~(\ref{eq:alt:def:lie:derivative}) can be rewritten as
\begin{equation} \label{eq:def:lie:derivative}
  \left(\mathcal{L}_{\vect{F}} \vect{v}\right)^{\mu} \left[\vect{x}(t)\right] =
  \frac{\mathrm{d}}{\mathrm{d} t}
  v^{\mu}\left[\vect{x}(t) \right] -
  \left. v^{\nu} \partial_\nu F^{\mu}\right |_{\vect{x}(t)} \, .
\end{equation}
This shows that the existence of the Lie derivative of $\vect{v}$, with respect
to the differentiable vector field $\vect{F}$, does not actually require the full
differentiability of $\vect{v}(\vect x)$ as a function of $\vect x$, but only the
differentiability of $\vect{v}$ along individual trajectories of $\vect{F}$.
It follows from Eq.~(\ref{eq:def:lie:derivative}) that
Eq.~(\ref{eq:traject}) can be rewritten as
 \begin{equation}
    \mathcal{L}_{\vect{F}} \vect{v} + b \vect v = 0 \,.
    \label{eq:alt:def}
  \end{equation}  
According to Eq.~(\ref{eq:alt:def:lie:derivative}), if
the field $\vect{v}$ is differentiable with respect to every coordinate,
then Eq.~(\ref{eq:alt:def}) becomes
\begin{equation} \label{eq:alt:def:commutator}
  \left[\vect{v}, \vect{F}\right] = b \vect v\,.
\end{equation} 

The fact that CLVs are determined up to an arbitrary scalar factor
implies that, if $\vect{v}(\vect{x})$ is a vector field satisfying the
hypotheses of lemma \ref{lemma:eq_br}, then the same is true
also for the vector field 
\begin{equation}\label{eq:vgt}
\vect{v}'(\vect{x})=a(\vect{x})\vect{v}(\vect{x}) \, ,
\end{equation}
where $a(\vect{x})$ is an arbitrary nonvanishing smooth scalar function.
Hence the thesis of the lemma must apply equally well to the CLF
$\vect{v}'$. In fact, it follows from lemma \ref{lemma0} that, if
$\vect{v}$ satisfies Eq.~(\ref{eq:alt:def}) on a trajectory $\gamma$,  
then $\vect{v}'$ satisfies the equation
\begin{equation}\label{eq:galt:def}
\mathcal{L}_{\vect{F}}\vect{v}' + b' \vect v' = 0
\end{equation}
with 
\begin{equation}\label{eq:bgt}
  b'\left[\vect{x}(t)\right]
=b\left[\vect{x}(t)\right] -\frac{\mathrm{d}}{\mathrm{d} t} 
\ln\left|a\left[\vect{x}(t)\right]\right|   \, .
\end{equation}
Introducing also the Lie derivative 
\[
\mathcal{L}_{\vect{F}}\phi=\frac{\mathrm{d}}{\mathrm{d} t}\phi
=F^\mu \partial_\mu \phi
\]
of a scalar function $\phi$ with respect to the vector field $\vect{F}$,
Eq.~(\ref{eq:bgt}) becomes
\begin{equation}\label{eq:bgt2}
b'=b -\mathcal{L}_{\vect{F}} \ln\left|a\right|\,.
\end{equation}
Furthermore, if 
$|a(\vect{x})|$ and $|a(\vect{x})|^{-1}$ are both limited on $D$, then
\begin{align*}
 &\lim_{t\to \pm\infty}\frac{\int_0^t b'\left[\vect{x}(t')\right]{\mathrm{d}}t'}{t}\\
    =\ &\lim_{t\to \pm\infty}\frac{\int_0^t b\left[\vect{x}(t')\right]{\mathrm{d}}t'
-\ln\left|a\left[\vect{x}(t)\right]\right|+\ln\left|a\left[\vect{x}(0)\right]\right|}{t} \\
=\ &\lim_{t\to \pm\infty}\frac{\int_0^t b\left[\vect{x}(t')\right]{\mathrm{d}}t'}{t}
=\lambda \, ,
\end{align*}
since $\ln|a(\vect{x})|$ is a limited function on $D$.  The
equivalence between Eqs.~(\ref{eq:alt:def}) and (\ref{eq:galt:def})
shows that the differential equation for the CLFs has an important
invariance property, which we will exploit later in this paper and we
will further analyze in section \ref{sec:gauge}.

In particular, it follows from Eqs.~(\ref{eq:vgt})--(\ref{eq:bgt2}) that the
normalized vector field 
\begin{equation}\label{eq:wnorm}
\vect{w}(\vect{x})=\frac
{\vect{v}(\vect{x})}{\lVert \vect{v}\left(\vect {x}\right) \rVert}
\end{equation}
satisfies the equation 
\begin{equation}\label{eq:walt:def}
\mathcal{L}_{\vect{F}}\vect{w} + c \vect w = 0
\end{equation}
with
\begin{equation}\label{eq:cgt2}
c=b +\mathcal{L}_{\vect{F}} \ln\lVert\vect v\rVert\,.
\end{equation}

For an ergodic system, the set of LEs is the same at almost all points
of $X$ with respect to the preserved measure
$\mu$~\cite{ruelle1979}. \detail{page 282, 2.1} For such systems one
can then expect that there exist CLFs defined almost everywhere on
$X$. In order to deal with this case, we shall make use of the 
following simple lemma
which states that, if a scalar function is integrable over a 
measurable manifold, then the integral of its Lie derivative with respect
to a measure-preserving flow vanishes.

\begin{lemma}\label{lemma:lie}
Let $\mu$ be a positive measure on the manifold $X$, and
let the vector field $\vect{F}$ generate a flow on 
$X$ which preserves the measure $\mu$.
If $c$ is a differentiable scalar function defined on $X$,
which is integrable over $X$ with respect to the measure $\mu$,
then 
\[
\int_X \mathrm{d}\mu(\vect{x}) \mathcal{L}_{\vect{F}}c(\vect{x})=0\,.
\]
\end{lemma}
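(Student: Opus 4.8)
The plan is to exploit the invariance of $\mu$ under the flow $\Phi_t$ generated by $\vect{F}$, which reduces the statement to the fact that $t\mapsto \int_X c(\Phi_t(\vect{x}))\,\mathrm{d}\mu(\vect{x})$ is constant. Morally, since $\mathcal{L}_{\vect{F}}c(\vect{x})=\frac{\mathrm{d}}{\mathrm{d}t}\big|_{t=0}c(\Phi_t(\vect{x}))$, one would like to write
\[
\int_X \mathrm{d}\mu(\vect{x})\,\mathcal{L}_{\vect{F}}c(\vect{x})
=\frac{\mathrm{d}}{\mathrm{d}t}\bigg|_{t=0}\int_X c(\Phi_t(\vect{x}))\,\mathrm{d}\mu(\vect{x})
=\frac{\mathrm{d}}{\mathrm{d}t}\bigg|_{t=0}\int_X c(\vect{x})\,\mathrm{d}\mu(\vect{x})=0,
\]
the second equality being exactly the preservation of $\mu$. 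The only nontrivial point is the first equality, i.e.\ differentiation under the integral sign.

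To make this rigorous I would integrate in time before integrating over $X$, so as to avoid the pointwise time derivative. Fix $T>0$; since $c$ is differentiable and $\vect{F}$ generates the flow, the fundamental theorem of calculus along each trajectory $s\mapsto\Phi_s(\vect{x})$ gives $\int_0^T \mathcal{L}_{\vect{F}}c(\Phi_s(\vect{x}))\,\mathrm{d}s = c(\Phi_T(\vect{x}))-c(\vect{x})$. Integrating over $X$ with respect to $\mu$, the right-hand side becomes $\int_X c(\Phi_T(\vect{x}))\,\mathrm{d}\mu(\vect{x})-\int_X c(\vect{x})\,\mathrm{d}\mu(\vect{x})=0$ by invariance of $\mu$. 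On the left-hand side I would apply Fubini to exchange $\int_X$ and $\int_0^T$, and then, for each fixed $s$, change variables $\vect{y}=\Phi_s(\vect{x})$ using measure preservation once more, so that $\int_X \mathcal{L}_{\vect{F}}c(\Phi_s(\vect{x}))\,\mathrm{d}\mu(\vect{x})=\int_X \mathcal{L}_{\vect{F}}c(\vect{y})\,\mathrm{d}\mu(\vect{y})$ independently of $s$. Hence the left-hand side equals $T\int_X \mathcal{L}_{\vect{F}}c\,\mathrm{d}\mu$, and dividing by $T$ yields the claim.

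The step that requires care — and \emph{the main obstacle} — is the use of Fubini, which needs $(\vect{x},s)\mapsto \mathcal{L}_{\vect{F}}c(\Phi_s(\vect{x}))$ to be integrable on $X\times[0,T]$. By the very change of variables used above, $\int_X|\mathcal{L}_{\vect{F}}c(\Phi_s(\vect{x}))|\,\mathrm{d}\mu(\vect{x})=\int_X|\mathcal{L}_{\vect{F}}c|\,\mathrm{d}\mu$ for every $s$, so the double integral of the absolute value is $T\int_X|\mathcal{L}_{\vect{F}}c|\,\mathrm{d}\mu$; thus Fubini applies as soon as $\mathcal{L}_{\vect{F}}c\in L^1(\mu)$. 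This integrability is in any case implicit in the statement, since otherwise the integral $\int_X \mathcal{L}_{\vect{F}}c\,\mathrm{d}\mu$ would not be well defined, so I would add it as an explicit hypothesis or keep it as a standing assumption. (As an alternative, if one assumes that $\mu$ is a smooth invariant volume form on a compact boundaryless $X$, there is a one-line argument: $\mathcal{L}_{\vect{F}}(c\,\mu)=(\mathcal{L}_{\vect{F}}c)\,\mu$ because $\mathcal{L}_{\vect{F}}\mu=0$, and $\int_X\mathcal{L}_{\vect{F}}(c\,\mu)=\int_X\mathrm{d}(\iota_{\vect{F}}(c\,\mu))=0$ by Stokes; the flow argument is preferable because it also covers singular measures and noncompact $X$.)
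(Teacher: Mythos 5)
Your argument is correct, and it takes a genuinely more careful route than the paper's. The paper's own proof is exactly the ``moral'' computation you open with: it writes $\mathcal{L}_{\vect{F}}c(\vect{x})=\frac{\mathrm{d}}{\mathrm{d}t}c\left(\Phi_t(\vect{x})\right)$, observes that $\int_X c\left(\Phi_t(\vect{x})\right)\mathrm{d}\mu(\vect{x})$ is constant in $t$ by invariance of $\mu$, and concludes by pulling the time derivative outside the integral --- without justifying that exchange of $\frac{\mathrm{d}}{\mathrm{d}t}$ and $\int_X$. Your version replaces that step by the fundamental theorem of calculus along trajectories followed by Fubini--Tonelli, which is the standard rigorous substitute and, as a bonus, isolates precisely the hypothesis the exchange secretly needs: $\mathcal{L}_{\vect{F}}c\in L^1(\mu)$. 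You are right that this is not among the stated hypotheses (only $c$ itself is assumed integrable) and is at best implicit in the assertion that $\int_X\mathcal{L}_{\vect{F}}c\,\mathrm{d}\mu$ vanishes; flagging it as a standing assumption is the honest thing to do, and in the paper's actual applications (e.g.\ proposition \ref{prop:alt:def2} and the compact case) it is satisfied. The only point you pass over lightly is the joint measurability of $(\vect{x},s)\mapsto\mathcal{L}_{\vect{F}}c\left(\Phi_s(\vect{x})\right)$ required by Fubini, but this follows since it is the pointwise limit in $s$ of difference quotients of the jointly continuous function $c\left(\Phi_s(\vect{x})\right)$, so it is not a real gap. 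In short: same underlying mechanism (invariance of $\mu$ under the flow), but your integrate-then-swap argument is more general and patches a genuine lacuna in the published proof, at the cost of one extra explicit hypothesis.
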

\begin{proof}
Let $\Phi_t(\vect{x})$ be the map which describes the evolution
of the phase space $X$ at time $t$
according to Eq.~(\ref{eq:xp}), so that
\[
\frac{\mathrm{d}}{\mathrm{d}t}\Phi_t(\vect{x})=\vect{F}
\left[(\Phi_t(\vect{x})\right] 
\]
and $\Phi_0(\vect x)=\vect x\,\forall\,\vect x\in X$. Then
\[
\mathcal{L}_{\vect{F}}c(\vect{x})=\frac{\mathrm{d}}{\mathrm{d}t}
c\left(\Phi_t(\vect {x})\right)\,.
\]
Moreover,
since for all $t$ the map $\Phi_t(\vect{x})$ is a transformation of $X$ 
which preserves the measure $\mu$, we have
\[
\int_X \mathrm{d}\mu(\vect{x})c\left(\Phi_t(\vect {x})\right) 
=\int_X \mathrm{d}\mu(\vect{x})c\left(\vect {x}\right) \,, 
\]
which means that the integral on
left-hand-side of the above equation is a constant
independent of $t$. It follows that
\[
\int_X \mathrm{d}\mu(\vect{x}) \mathcal{L}_{\vect{F}}c(\vect{x})
=\frac{\mathrm{d}}{\mathrm{d}t}
\int_X \mathrm{d}\mu(\vect{x})c\left(\Phi_t(\vect {x})\right) =0\,.
\qedhere
\]
\end{proof}

We are now ready to present the first result about the global equation
characterizing the CLFs.
The following proposition can actually be considered as an extension of the
result, which was proved in lemma \ref{lemma:eq_br} for individual
trajectories, to CLFs defined at almost all points of $X$.

\begin{proposition}  \label{prop:1}
Let $\mu$ be a positive measure on the Riemaniann manifold $X$ such
that $\mu(X)<+\infty$, and let the vector field $\vect{F}$ generate an
ergodic flow on $X$ which preserves the measure $\mu$.  Let $\vect{v}$
be a CLF, corresponding to a nondegenerate LE $\lambda$, on an
invariant domain \mbox{$D\subseteq X$} such that $\mu(D)=\mu(X)$.
Then the Lie derivative of $\vect{v}$ along $\vect{F}$ exists, and
there exists a scalar field $b$ such that the differential equation 
\[
 \mathcal{L}_{\vect{F}} \vect{v} + b \vect v = 0 
\]
holds on $D$. 
Let us also suppose that the function 
$\ln \lVert \vect{v}\left(\vect {x}\right) \rVert$ is integrable over $X$ 
with respect to the measure $\mu$. Then
  \begin{equation}\label{eq:lambda_b}
    \lambda=\left<b\right>\,, 
  \end{equation}
  where $\left<b\right>$ denotes the
  average of $b$ over the manifold $X$:
  \begin{equation}\label{eq:aver_b}
    \left<b\right>=  \frac {1}{\mu(X)}
    \int_X \mathrm{d}\mu (\vect{x})\, b(\vect{x}) \, .
  \end{equation}
\end{proposition}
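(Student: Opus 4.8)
The plan is to separate the two assertions: the existence of $\mathcal{L}_{\vect{F}}\vect{v}$ and of a scalar field $b$ with $\mathcal{L}_{\vect{F}}\vect{v}+b\vect{v}=0$ on $D$ is Lemma~\ref{lemma:eq_br} upgraded from a single trajectory to all of $D$, while the identity $\lambda=\langle b\rangle$ is where ergodicity and the integrability hypothesis enter.

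First I would note that, since $D$ is invariant, it is a disjoint union of the full (bi-infinite) trajectories of $\vect{F}$ it contains, each point of $D$ lying on exactly one of them, and $\mu(D)=\mu(X)$ means $X\setminus D$ is $\mu$-null. Applying Lemma~\ref{lemma:eq_br} to each trajectory $\gamma\subseteq D$ already gives that $\vect{v}[\vect{x}(t)]$ is differentiable in $t$ along $\gamma$ --- hence, by the remark following Eq.~(\ref{eq:def:lie:derivative}), that $\mathcal{L}_{\vect{F}}\vect{v}$ exists pointwise on $D$ --- and it produces on each $\gamma$ a scalar satisfying Eq.~(\ref{eq:traject}), i.e.\ Eq.~(\ref{eq:alt:def}). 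Since the trajectories partition $D$, these patch into a single scalar field $b$ on $D$; equivalently, one sees in the proof of Lemma~\ref{lemma:eq_br} that $\mathcal{L}_{\vect{F}}\vect{v}$ is everywhere parallel to $\vect{v}$ --- a consequence of the nondegeneracy of $\lambda$ together with the invariance of the CLV property under the linearized flow --- so one may simply set $b=-\langle\mathcal{L}_{\vect{F}}\vect{v},\vect{v}\rangle/\lVert\vect{v}\rVert^{2}$ and extend it arbitrarily on the $\mu$-null set $X\setminus D$.

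For $\lambda=\langle b\rangle$ I would combine the asymptotic identity already buried in the proof of Lemma~\ref{lemma:eq_br} with Birkhoff's ergodic theorem. Through a generic $\vect{x}_{0}\in D$, Eqs.~(\ref{eq:def:lambda:theo}) and~(\ref{eq:sol_cauchy}) give
\[
\lambda=\lim_{t\to\pm\infty}\left(\frac{\ln\lVert\vect{v}[\vect{x}(t)]\rVert}{t}+\frac{1}{t}\int_{0}^{t}b[\vect{x}(t')]\,\mathrm{d}t'\right).
\]
Since $\ln\lVert\vect{v}\rVert\in L^{1}(\mu)$ and $\mu$ is finite and $\vect{F}$-invariant, a Borel--Cantelli argument applied to the time-one map $\Phi_{1}$ (the usual fact that an $L^{1}$ function divided by time tends to $0$ along a.e.\ orbit) forces $\ln\lVert\vect{v}[\vect{x}(t)]\rVert/t\to 0$ for $\mu$-a.e.\ $\vect{x}_{0}$, so the time average of $b$ tends to $\lambda$ a.e. On the other hand, Birkhoff's ergodic theorem for the ergodic measure-preserving flow on the finite measure space $(X,\mu)$ makes the same time average converge a.e.\ to the constant $\langle b\rangle$; uniqueness of limits then yields $\lambda=\langle b\rangle$. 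A variant, which uses the integrability hypothesis only through Lemma~\ref{lemma:lie}, is to pass first to the normalized field $\vect{w}=\vect{v}/\lVert\vect{v}\rVert$: Lemma~\ref{lemma:eq_br} applies to it with the trivial bound $\lVert\vect{w}\rVert\equiv1$ and pins the time average of its scalar $c$ to $\lambda$ on \emph{every} orbit, and then $b=c-\mathcal{L}_{\vect{F}}\ln\lVert\vect{v}\rVert$ together with $\langle\mathcal{L}_{\vect{F}}\ln\lVert\vect{v}\rVert\rangle=0$ (Lemma~\ref{lemma:lie}) recovers the mean.

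The hard part will be the integrability of $b$ itself: Birkhoff's theorem, and even the meaning of $\langle b\rangle=\mu(X)^{-1}\int_{X}b\,\mathrm{d}\mu$, presuppose $b\in L^{1}(\mu)$, whereas the stated hypotheses only control $\ln\lVert\vect{v}\rVert$. I expect this must either be granted as a tacit assumption or be extracted from the structure --- for instance, reducing via $b=c-\mathcal{L}_{\vect{F}}\ln\lVert\vect{v}\rVert$ with $\mathcal{L}_{\vect{F}}\ln\lVert\vect{v}\rVert\in L^{1}$ to the integrability of the scalar $c$ of the normalized CLF, whose time average is finite and constant ($=\lambda$) on every orbit. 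The remaining points --- patching the local definitions of $b$, and checking that the full-measure sets arising from the invariance of $D$, from Birkhoff, and from the sublinear-growth estimate share a common point --- are routine.
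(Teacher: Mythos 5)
Your proposal is correct, and your ``variant'' is in fact exactly the paper's proof: the paper passes immediately to the normalized field $\vect{w}=\vect{v}/\lVert\vect{v}\rVert$, applies Lemma~\ref{lemma:eq_br} with the trivial bounds $\lVert\vect{w}\rVert=\lVert\vect{w}\rVert^{-1}=1$ to pin the time average of its scalar $c$ to $\lambda$ on every orbit, invokes ergodicity to get $\lambda=\langle c\rangle$, and then uses $b=c-\mathcal{L}_{\vect{F}}\ln\lVert\vect{v}\rVert$ together with Lemma~\ref{lemma:lie} to conclude $\langle b\rangle=\langle c\rangle=\lambda$. Your primary route differs in that it works with $b$ directly along orbits and replaces the normalization trick by the standard sublinear-growth fact that $\ln\lVert\vect{v}[\vect{x}(t)]\rVert/t\to 0$ along $\mu$-a.e.\ orbit for an $L^{1}$ function; this is a legitimate alternative, and it makes visible where the hypothesis $\ln\lVert\vect{v}\rVert\in L^{1}(\mu)$ is actually used, but it requires Birkhoff for $b$ itself, hence $b\in L^{1}(\mu)$. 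Your closing concern is well placed and applies to the paper's own argument as well: the step ``ergodicity implies the time average of $c$ equals $\langle c\rangle$'' tacitly assumes $c\in L^{1}(\mu)$, which Proposition~\ref{prop:1} does not state (contrast Proposition~\ref{prop:alt:def}, where integrability of $c$ is an explicit hypothesis, and Proposition~\ref{prop:alt:def2}, where it is derived from compactness via the bound of Proposition~\ref{prop:blim}). Convergence of the time averages of $c$ to the finite constant $\lambda$ on every orbit does not by itself force $c\in L^{1}$ for a signed $c$, so this really is an implicit extra assumption rather than something recoverable from the stated hypotheses. Your remarks on patching the trajectory-wise scalars into a single field $b$ (e.g.\ via $b=-\langle\mathcal{L}_{\vect{F}}\vect{v},\vect{v}\rangle/\lVert\vect{v}\rVert^{2}$, using that nondegeneracy forces $\mathcal{L}_{\vect{F}}\vect{v}$ to be parallel to $\vect{v}$) are correct and in fact more careful than the paper, which asserts the existence of $b$ on $D$ inside Lemma~\ref{lemma:eq_br} while constructing it only along a single trajectory.
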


\begin{proof}
If $\vect v$ is  a CLF with LE $\lambda$, then the same is true for
the vector field $\vect w$ defined by Eq.~(\ref{eq:wnorm}), and
since $\lVert \vect{w}\left(\vect {x}\right) \rVert=1$ everywhere,
it follows from lemma \ref{lemma:eq_br} that
there exists on the domain $D$
a scalar field $c$ such that Eq.~(\ref{eq:walt:def}) holds at
all points of $D$ and
\begin{equation}\label{eq:lambda0}
\lambda=
\lim_{t\to \pm\infty}\frac{\int_0^t c\left[\vect{x}(t')\right]
{\mathrm{d}}t'}{t} \,.
\end{equation}
  In addition, the ergodicity implies that the time average 
  of the function $c$ along a generic trajectory equals the average of $c$
  over the phase space, so that Eq.~(\ref{eq:lambda0}) is equivalent to
\begin{equation}\label{eq:lambda}
\lambda= \frac {1}{\mu(X)}
    \int_X \mathrm{d}\mu (\vect{x})\, c(\vect{x})
=\left<c\right>\,.
\end{equation}
Since $\vect v(\vect x)=\lVert \vect v(\vect x)\rVert \vect w(\vect x)$,
it follows from Eq.~(\ref{eq:walt:def})
that $\vect{v}(\vect{x})$ satisfies Eq.~(\ref{eq:alt:def})
with 
\begin{equation}\label{eq:bb}
b(\vect{x})=c(\vect{x})-\mathcal{L}_{\vect{F}}
\ln \lVert \vect{v}\left(\vect {x}\right) \rVert \,,
\end{equation}
in accordance with Eq.~(\ref{eq:cgt2}).
If the function $\ln \lVert \vect{v}\left(\vect {x}\right) \rVert$ is 
integrable  over $X$, by applying lemma \ref{lemma:lie}
we get $\left<b\right>=\left<c\right>$, so Eq.~(\ref{eq:lambda_b})
follows from Eq.~(\ref{eq:lambda}).
\end{proof}

The following proposition can be considered in some respect as the
inverse of the previous one. It shows in fact
that in an ergodic system, under quite
general hypotheses, the fact of satisfying Eq.~(\ref{eq:alt:def}) is a
sufficient condition for a vector field $\vect{v}$ in order to be a
CLF.

\begin{proposition}
  \label{prop:alt:def}
Let $\mu$ be a positive measure on the Riemaniann manifold $X$ such
that $\mu(X)<+\infty$, and let the vector field $\vect{F}$ generate an
ergodic flow on $X$ which preserves the measure $\mu$.  Let $\vect{v}$
and $b$ be respectively a nonvanishing vector field and a scalar
function satisfying the equation
\[
 \mathcal{L}_{\vect{F}} \vect{v} + b \vect v = 0 
\]
on an invariant domain \mbox{$D\subseteq X$}, such that
$\mu(D)=\mu(X)$. Let us suppose that the function
\begin{equation}\label{eq:bb2}
c(\vect{x})=b(\vect{x})+\mathcal{L}_{\vect{F}}
\ln \lVert \vect{v}\left(\vect {x}\right) \rVert 
\end{equation}
is integrable over $X$ with respect to the measure $\mu$, and
that $\lVert \vect{v}\left(\vect {x}\right) \rVert$ is differentiable
on $D$. Then $\vect{v}$ is a CLF with LE $\lambda=\left<c\right>$ on
an invariant domain $D'\subseteq D$ such that $\mu(D')=\mu(X)$. If, in
addition, also $\ln \lVert \vect{v}\left(\vect {x}\right) \rVert$ is
integrable over $X$, then $\left<b\right>=\left<c\right>=\lambda$.
\end{proposition}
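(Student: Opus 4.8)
The plan is to use the gauge invariance of Eq.~(\ref{eq:alt:def}) --- that is, Lemma~\ref{lemma0} --- to pass to the normalized field, and then to read off the Lyapunov exponent from the pointwise ergodic theorem. First I would set $\vect{w}(\vect{x})=\vect{v}(\vect{x})/\lVert\vect{v}(\vect{x})\rVert$; since $\vect{v}$ is nonvanishing and $\lVert\vect{v}\rVert$ is differentiable, $\vect{w}$ is still differentiable along trajectories, and applying Lemma~\ref{lemma0} along any trajectory $\gamma\subseteq D$ with the rescaling $\vect{v}\mapsto\vect{v}/\lVert\vect{v}\rVert$ shows, exactly as in Eqs.~(\ref{eq:wnorm})--(\ref{eq:cgt2}), that $\vect{w}$ satisfies $\mathcal{L}_{\vect{F}}\vect{w}+c\,\vect{w}=0$ on $D$, with $c$ the function of Eq.~(\ref{eq:bb2}).

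Next, fix $\vect{x}_0\in D$, let $\vect{x}(t)$ be the trajectory through it, and set $a(t)=\exp\!\big(\int_0^t c[\vect{x}(t')]\,\mathrm{d}t'\big)$, which is smooth and positive. Then $a(t)\,\vect{w}[\vect{x}(t)]$ satisfies Eq.~(\ref{eq:dxp}), because Eq.~(\ref{eq:dxp}) is Eq.~(\ref{eq:traject0}) with vanishing scalar term and Lemma~\ref{lemma0} (applied with $\vect{v}\to\vect{w}$, $b\to c$) produces the new scalar $c[\vect{x}(t)]-\frac{\mathrm{d}}{\mathrm{d}t}\ln a(t)=0$. Since $a(0)=1$, uniqueness of solutions of the linear equation~(\ref{eq:dxp}) gives that the solution $\vect{\delta x}(t)$ with $\vect{\delta x}(0)=\vect{w}(\vect{x}_0)$ equals $a(t)\,\vect{w}[\vect{x}(t)]$, and as $\lVert\vect{w}\rVert\equiv 1$ we obtain
\[
\frac{\ln\lVert\vect{\delta x}(t)\rVert}{t}=\frac{1}{t}\int_0^t c\left[\vect{x}(t')\right]\mathrm{d}t'\,.
\]
Because $c\in L^{1}(\mu)$, $\mu(X)<+\infty$ and the flow is measure-preserving and ergodic, the pointwise ergodic theorem --- applied to $\Phi_t$ as $t\to+\infty$ and to the time-reversed flow $\Phi_{-t}$ as $t\to-\infty$, just as in the proof of Proposition~\ref{prop:1} --- gives, for $\mu$-almost every $\vect{x}_0$, that both one-sided limits of the right-hand side exist and equal $\langle c\rangle$. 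Hence at $\mu$-a.e.\ $\vect{x}_0\in D$ the two limits in Definition~\ref{def:clv} coincide with $\langle c\rangle$, so $\vect{w}(\vect{x}_0)$, and therefore $\vect{v}(\vect{x}_0)$ --- a nonzero multiple of it --- is a CLV with LE $\langle c\rangle$.

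I would then let $D'$ be the intersection of $D$ with the set of points at which this two-sided Birkhoff convergence holds. This set is invariant under the flow (convergence of a Birkhoff average to the space average is unchanged under a time shift of the base point, and being a CLV is preserved by the linearized flow) and, since the exceptional set is $\mu$-null and $\mu(D)=\mu(X)$, it satisfies $\mu(D')=\mu(X)$. On $D'$ the field $\vect{v}$ is continuous, $\lVert\vect{v}\rVert$ is differentiable, and $\vect{v}(\vect{x})$ is a CLV with the constant LE $\lambda=\langle c\rangle$; thus $\vect{v}$ is a CLF on $D'$ with LE $\langle c\rangle$. Finally, if $\ln\lVert\vect{v}\rVert$ is also integrable over $X$ then, being differentiable (as $\lVert\vect{v}\rVert$ is and $\vect{v}$ does not vanish), it falls under the hypotheses of Lemma~\ref{lemma:lie}, giving $\int_X\mathrm{d}\mu\,\mathcal{L}_{\vect{F}}\ln\lVert\vect{v}\rVert=0$; since $b=c-\mathcal{L}_{\vect{F}}\ln\lVert\vect{v}\rVert$ with $c$ integrable, $b$ is integrable and $\langle b\rangle=\langle c\rangle=\lambda$.

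I expect the most delicate point to be the bookkeeping around the $\mu$-null exceptional set produced by the ergodic theorem: one has to make sure that $D'$ can be taken simultaneously invariant, contained in $D$, and of full measure, and that the forward and backward time averages really coincide $\mu$-a.e., which is precisely where ergodicity (through the reversed flow) enters. Everything else --- the existence of the Lie derivatives along trajectories, and the applicability of Lemmas~\ref{lemma0} and~\ref{lemma:lie} --- is inherited directly from the stated hypotheses.
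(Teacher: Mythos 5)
Your proof is correct and follows essentially the same route as the paper's own proof: normalize to $\vect{w}=\vect{v}/\lVert\vect{v}\rVert$ so that $c$ of Eq.~(\ref{eq:bb2}) appears, use Lemma~\ref{lemma0} to identify $a(t)\,\vect{w}[\vect{x}(t)]$ with the solution $\vect{\delta x}(t)$ of Eq.~(\ref{eq:dxp}), apply the ergodic theorem in both time directions to get $\langle c\rangle$ as the two-sided limit on a full-measure invariant set $D'$, and conclude with Lemma~\ref{lemma:lie} for $\langle b\rangle=\langle c\rangle$. The only differences are cosmetic: you make explicit the uniqueness argument for the linearized flow and the invariance of $D'$, which the paper leaves implicit.
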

Note that, since $\mu(X)<+\infty$, if $\lVert
\vect{v}(\vect{x})\rVert$ and $\lVert \vect{v}(\vect{x})\rVert^{-1}$
are both limited on $D$, then the hypothesis in propositions
\ref{prop:1} and \ref{prop:alt:def} about the integrability of the
function $\ln \lVert \vect{v}\left(\vect {x}\right) \rVert$ is
obviously satisfied.  Moreover, in such a case, the
  hypothesis on the integrability of the function $c\left(\vect {x}\right)$, 
defined by  Eq.~(\ref{eq:bb2}), is equivalent to the hypothesis on the
  integrability of the function $b\left(\vect {x}\right)$ appearing in
  Eq.~(\ref{eq:alt:def}).  In other words, if $\lVert
  \vect{v}(\vect{x})\rVert$ and $\lVert \vect{v}(\vect{x})\rVert^{-1}$
  are both limited on $D$ and $b\left(\vect {x}\right)$ is integrable, 
then $c\left(\vect {x}\right)$, defined by
  Eq.~(\ref{eq:bb2}) is also integrable as requested by the hypothesis
  of Prop.~\ref{prop:alt:def}. In section \ref{prop:norm}
(see proposition \ref{prop:alt:def2}) we will show that
these hypotheses take a simpler form when the space $X$ is compact.

\begin{proof}[Proof of Proposition \ref{prop:alt:def}]
If $\vect{v}(\vect{x})$ satisfies Eq.~(\ref{eq:alt:def}), then
the vector field $\vect{w}(\vect{x})$ defined by Eq.~(\ref{eq:wnorm})
satisfies Eq.~(\ref{eq:walt:def}) with $c$ given by Eq.~(\ref{eq:bb2}).
  Let us take a point $\vect{x}_0\in D$ and let $\vect{x}(t)$ be the
  corresponding trajectory, i.e.\ the solution of Eq.~(\ref{eq:xp})
  with initial condition $\vect{x}(0)=\vect{x}_0$.  It follows from
Eq.~(\ref{eq:walt:def}) that
  \begin{equation}\label{eq:trajectw}
    \frac{\mathrm{d}}{\mathrm{d}t}w^{\mu} \left[\vect{x}(t)\right]
    =\left[ w^\nu \partial_\nu F^\mu  -c w^\mu\right]_{\vect{x}(t)} \,.
  \end{equation}
If we define
\begin{equation}\label{eq:sol_cauchy1}
    a(t)=\exp\left\{\int_0^t c\left[\vect{x}(t')\right]
        {\mathrm{d}}t' \right\}\,,
  \end{equation} 
it follows from lemma \ref{lemma0} that the vector function
$\vect w'(t)=a(t)\vect{w}\left[\vect{x}(t)\right]$ satisfies the equation
\begin{equation}
 \frac{\mathrm{d}}{\mathrm{d}t}w'^{\mu} (t)
    = w'^\nu (t)\partial_\nu F^\mu\left[\vect{x}(t)\right] \,.
\end{equation}
We then see that $\vect w'(t)=\vect{\delta x}(t)$, where 
$\vect{\delta x}(t)$ is the solution of Eq.~(\ref{eq:dxp}) with initial condition 
$\vect{\delta  x}(0)=\vect{w}(\vect{x}_0)$.

Since $\vect{\delta x}(t)=a(t)\vect{w}\left[\vect{x}(t)\right]$ and
$\lVert \vect{w}\left(\vect {x}\right) \rVert=1$ everywhere,
we have
  \begin{align}
    \label{eq:timeav}
    \lim_{t\to +\infty} \frac{\ln \lVert \vect{\delta x}(t) \rVert}{t}  
    & =\lim_{t\to +\infty}\frac{\ln a(t)}{t}    \nonumber \\
    &    =\lim_{t\to +\infty}\frac{\int_0^t c\left[\vect{x}(t')\right]
{\mathrm{d}}t'}{t} \, .
  \end{align}
  The last member of Eq.~(\ref{eq:timeav}) represents the time average
  of the function $c$ on the considered trajectory for
  positive times. Since $c$ is integrable over $X$ and the
system is ergodic, for almost all the points  $\vect{x}_0\in D$ with
respect to measure $\mu$
this average equals the average of $c$ over the
manifold $X$, so 
\[
 \lim_{t\to +\infty} \frac{\ln \lVert \vect{\delta x}(t) \rVert}{t}=
\frac {1}{\mu(X)}
    \int_X \mathrm{d}\mu (\vect{x})\, c(\vect{x})
=\left<c\right>\,.
\]
  
  By analyzing in a similar way the limit for $t\to -\infty$, we
  obtain that there exists a subset $D'\subseteq D$, with
  $\mu(D')=\mu(X)$, such that
  \[
  \lim_{t\to -\infty} \frac{\ln \lVert \vect{\delta x}(t) \rVert}{t} =
  \lim_{t\to +\infty} \frac{\ln \lVert \vect{\delta x}(t) \rVert}{t}=
  \langle c\rangle
  \]
  for all the points $\vect{x}_0\in D'$.  According to definition
  \ref{def:clv}, this means that $\vect{w}(\vect{x}_0)$ is a CLV at
  $\vect{x}_0$ with LE $\lambda=\langle c\rangle$, and the same
is then true for the vector $\vect{v}(\vect{x}_0)$.  The set $D'$ is
  obviously invariant under the evolution of the system so, by
  applying definition \ref{def:lvf}, we conclude that $\vect{v}$ is a
  CLF on $D'$ with LE $\lambda=\langle c\rangle$.

Finally, if the function
$\ln \lVert \vect{v}\left(\vect {x}\right) \rVert$ is 
integrable over $X$, by applying lemma \ref{lemma:lie}
we get from Eq.~(\ref{eq:bb2})
that also $b$ is integrable over $X$ and
 $\left<b\right>=\left<c\right>=\lambda$.
\end{proof}

Propositions \ref{prop:1} and \ref{prop:alt:def} together imply the
remarkable fact that, if the system is ergodic and $\lambda$ is a
nondegenerate LE, then a vector field $\vect{v}$ is a CLF with LE
$\lambda$ if and only if it satisfies Eq.~(\ref{eq:alt:def}) almost
everywhere on $X$.  Note that this is a {\em global} condition on the
vector field $\vect{v}$.  It is in fact easy to see that a {\em local}
solution of the first order differential equation (\ref{eq:alt:def}),
for an arbitrary scalar function $b$,
can be obtained after arbitrarily assigning the vector $\vect{v}$ on a
$(n-1)$-dimensional surface $\sigma$ transversal to the flow generated by
$\vect{F}$.  This obviously means that being a local solution of
Eq.~(\ref{eq:alt:def}) does not imply that a vector field $\vect{v}$
is a CLF. 
If one tries to extend such a local solution to the whole phase space
by solving Eq.~(\ref{eq:alt:def})
along individual trajectories, one is obviously faced by the problem that
each trajectory crosses the surface $\sigma$ infinitely many times.
Assuming that at a given crossing $\vect{v}\left[\vect{x}(t)\right]$ has the 
right value which was initially
assigned on $\sigma$, the same would not in general be true
for the subsequent times at which the trajectory crosses of the surface
again. According to proposition \ref{prop:1}, on the other hand,
if the values of $\vect{v}$ assigned at all points of
$\sigma$ correspond to CLVs with a given
nondegenerate LE $\lambda$, then there exists a scalar function
$b$ on $X$ such that a global solution of Eq.~(\ref{eq:alt:def})
can be obtained, and $b$ must satisfy Eq.~(\ref{eq:lambda_b}).

\section{Discussion of the results}\label{sec:gauge}

Since the two Eqs.~(\ref{eq:alt:def}) and (\ref{eq:galt:def}) are
formally identical, we can say that Eq.~(\ref{eq:alt:def}) is
invariant under the local ``gauge transformation'' expressed by
Eqs.~(\ref{eq:vgt}) and (\ref{eq:bgt2}).
Since the function $a$ nowhere vanishes, 
by continuity it has constant sign over the
domain $D$ of the CLF. Assuming that the sign is positive, we can
write $a(\vect{x})=e^{\varphi(\vect{x})}$, where $\varphi$ is an arbitrary
smooth scalar function. The transformation
given by Eqs.~(\ref{eq:vgt}) and (\ref{eq:bgt}) then takes the form
\begin{equation}\label{lgauge}
  \left\{\begin{array}{l}
  \vect{v}(\vect{x})  \mapsto  e^{\varphi(\vect{x})}\, \vect{v}(\vect{x}) \\
  b(\vect{x})  \mapsto b(\vect{x}) - \mathcal{L}_{\vect{F}}
  \varphi(\vect{x})  \, .
  \end{array}\right.
\end{equation}

From a mathematical point of view, such a gauge invariance recalls
that of field theories in fundamental physics. For instance, in
quantum electrodynamics, the fact the wavefunction $\psi$ is defined
at each space-time point up to an arbitrary phase factor, implies that
Dirac equation
\[
\gamma^\mu\left[i\partial_\mu-eA_\mu(x)\right]\psi(x)-m\psi(x)=0
\]
is invariant under the local gauge transformation
\begin{equation}\label{egauge}
\left\{\begin{array}{l}
  \psi(x)  \mapsto  e^{ie\alpha(x)}\, \psi(x) \\
  A^\mu(x)  \mapsto A^\mu(x) -\partial_\mu \alpha(x)
  \, ,
  \end{array}\right.
\end{equation}
where $x$ stands for the four space-time coordinates and $\alpha(x)$
is an arbitrary real scalar function \cite{zuber}.
 
The analogy between Eqs.~(\ref{lgauge}) and (\ref{egauge}) is obvious.
The transformation on the four-vector potential $A^\mu$, given by
Eq.~(\ref{egauge}), does not alter the value of the physically
relevant electromagnetic field tensor $F^{\mu\nu}=\partial^{\mu}A^\nu
-\partial^{\nu}A^\mu$. In a similar way, provided that the function
$\varphi(\vect{x})$ is integrable over $X$, it follows from
lemma \ref{lemma:lie} that
the transformation on the
scalar function $b$, given by Eq.~(\ref{lgauge}), does not alter the
physically relevant value of the LE $\lambda= \langle b\rangle$.
Suppose that a metric tensor has been defined over the manifold $X$, 
e.g.\ the euclidean tensor in a given system of coordinates.
In view of the gauge invariance which we have explained above, imposing
everywhere the condition $\lVert\vect{v}\rVert=1$ would just be
one of the infinite possible ways of ``fixing the gauge''.

It is worth remarking that the definition \ref{def:clv} of
CLV and of the corresponding LE, similarly to other definitions of
Lyapunov vectors and exponents adopted in the literature, is
explicitly based on the existence of a norm of tangent vectors, as shown by
Eq.~(\ref{eq:limits}).  The same is then true also for the definition
\ref{def:lvf} of CLF.  Despite this fact, as we have already pointed
out, both the property of being a CLV, and the value of the LE, are
actually independent of the choice of a particular metric tensor on
the space $X$.
It is therefore interesting to note that propositions \ref{prop:1} and
\ref{prop:alt:def} provide the possibility of an alternative
definition of CLF, and of the corresponding LE, which does not mention
at all the existence of a norm. One could in fact define as CLF any
vector field satisfying Eq.~(\ref{eq:alt:def}), and define its LE as
$\lambda=\langle b\rangle$.  In the case of nondegenerate CLFs in
ergodic systems, under very general hypotheses, as we have shown,
such a definition would be equivalent to definition
\ref{def:lvf}. In the case of a LE with degeneracy $\nu>1$, one could
conjecture, under suitable hypotheses, the existence of $\nu$ linearly
independent vector fields, each one satisfying an equation of the form
of Eq.~(\ref{eq:alt:def}).

It has been noticed that the CLVs ``represent the proper
generalisation of the concept of eigenvectors to a context where a
different matrix is applied at each time step.'' \cite{pikovsky}
\detail{page 58} The analogy with the eigenvector problem is
particularly evident in our alternative definition of CLF based on
Eq.~(\ref{eq:alt:def}), 
i.e.\ $-\mathcal{L}_{\vect{F}} \vect{v} =b \vect v$:
the left-hand-side is a linear operator acting on
$\vect{v}$ and the right-hand-side is the $\vect v$ itself multiplied by
a scalar. However, at variance with the usual eigenvector problem, the
scalar $b$ is a field (it is a function of the position $\vect{x}$) 
and depends on the choice of the gauge, so one should consider
as the actual eigenvalue 
the average of $b(\vect{x})$ over $X$, i.e.\ the LE $\lambda$.

\section{The equation for normalized covariant Lyapunov fields}
\label{prop:norm}

We have already underlined the fact that Eq.~(\ref{eq:alt:def}),
which according to the preceding results characterizes a CLF, does not
involve any metric on the phase space $X$. In this section we want
however to show that, if a CLF is normalized with respect
to a given metric tensor $g$, then it satisfies
a particular nonlinear differential equation. From this
equation, obviously involving the metric tensor $g$, one can derive
interesting results which also apply to generic CLFs.

We recall that,
if $g$ is the metric tensor defined on the Riemaniann manifold $X$, 
then the norm of a tangent vector $\vect{v}(\vect{x})$
is defined as
\begin{equation}\label{eq:normv}
\lVert \vect{v}(\vect{x})\rVert= 
\sqrt{v^\mu(\vect{\vect{x}}) g_{\mu\nu}(\vect{x})v^\nu(\vect{x})}\,.
\end{equation}
The Lie derivative of
$g$ with respect to the vector field $\vect F$ is given by
\[
\mathcal{L}_{\vect{F}} g_{\mu\nu}=F^\lambda \partial_\lambda g_{\mu\nu}
+g_{\mu\lambda}\partial_\nu F^\lambda + g_{\lambda\nu}\partial_\mu F^\lambda
=D_\mu F_\nu + D_\nu F_\mu\,,
\]
where $F_\nu=g_{\nu\lambda}F^\lambda$ and
$D_\mu$ is the covariant derivative associated with the
metric tensor $g$. The following proposition shows that, for a normalized 
nondegenerate CLF $\vect w$, the scalar function $c$ appearing in
Eq.~(\ref{eq:walt:def}) can be explicitly expressed as a quadratic function
of $\vect w$ itself. As a result, the CLF turns out to be the solution of
a closed nonlinear differential equation.

\begin{proposition}\label{prop:blim}
Let $\vect{F}$ be a differentiable vector field on the Riemaniann manifold
$X$, and let $D\subseteq X$ be an invariant set with respect to the evolution
generated by $\vect{F}$. Let $\vect{w}$
be a CLF, corresponding to a nondegenerate LE $\lambda$, on an
invariant domain \mbox{$D\subseteq X$} such that $\mu(D)=\mu(X)$. If
\begin{equation}\label{eq:wnorm1}
\lVert \vect{w}(\vect{x})\rVert=1 \ \forall\,\vect{x}\in D\,,
\end{equation}
then $\vect w$
satisfies Eq.~(\ref{eq:walt:def}) on the domain $D$ with 
\begin{equation}\label{eq:blie}
c= \frac 12
w^\mu \left(\mathcal{L}_{\vect{F}} g_{\mu\nu}\right) w^\nu\,.
\end{equation}
\end{proposition}

\begin{proof}
Since $\vect{w}$ is a CLF,
according to proposition \ref{prop:1} there exists a scalar function
$c$ such Eq.~(\ref{eq:walt:def}) holds on $D$.
From Eq.~(\ref{eq:wnorm1}), using Eq.~(\ref{eq:walt:def}) and
applying the Leibniz rule to the calculation of the Lie derivative along a
field line of $\vect{F}$, we then get
\begin{align*}
0&=\mathcal{L}_F \lVert \vect{w}\rVert^2= 
w^\mu \left(\mathcal{L}_{\vect{F}} g_{\mu\nu}\right) w^\nu 
+\left(\mathcal{L}_{\vect{F}} w^\mu\right) g_{\mu\nu} w^\nu +
w^\mu g_{\mu\nu} \left(\mathcal{L}_{\vect{F}} w^\nu\right)\\
&= w^\mu \left(\mathcal{L}_{\vect{F}} g_{\mu\nu}\right) w^\nu- 2c
\end{align*}
from which Eq.~(\ref{eq:blie}) is obtained.
\end{proof}

From the above proposition we can derive an explicit expression, involving the metric 
tensor $g$, for the LE associated with a generic CLF.

\begin{proposition}  \label{prop:LE}
Let $\mu$ be a positive measure on the Riemaniann manifold $X$ such
that $\mu(X)<+\infty$, and let the vector field $\vect{F}$ generate an
ergodic flow on $X$ which preserves the measure $\mu$.  Let $\vect{v}$
be a CLF, corresponding to a nondegenerate LE $\lambda$, on an
invariant domain \mbox{$D\subseteq X$} such that $\mu(D)=\mu(X)$.
Then
\begin{equation}\label{eq:LE}
   \lambda=  \frac {1}{\mu(X)}
    \int_X \mathrm{d}\mu\,
\frac{v^\mu \left(\mathcal{L}_{\vect{F}} g_{\mu\nu}\right) v^\nu}
{2\lVert \vect{v}\rVert^2} \, .
  \end{equation}
\end{proposition}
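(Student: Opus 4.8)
The plan is to reduce the claim to the normalized covariant Lyapunov field and then to chain together Propositions \ref{prop:blim} and \ref{prop:1}. First I would pass from $\vect{v}$ to the normalized field $\vect{w}(\vect{x}) = \vect{v}(\vect{x})/\lVert \vect{v}(\vect{x})\rVert$. By the gauge-invariance discussion leading to Eqs.~(\ref{eq:walt:def})--(\ref{eq:cgt2}), $\vect{w}$ is again a CLF on $D$ with the same nondegenerate LE $\lambda$, and it satisfies $\lVert \vect{w}\rVert = 1$ identically on $D$.

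Next I would apply Proposition \ref{prop:blim} to $\vect{w}$: since $\lVert \vect{w}\rVert = 1$ on $D$ and $\lambda$ is nondegenerate, there is a scalar field $c$ with $\mathcal{L}_{\vect{F}}\vect{w} + c\,\vect{w} = 0$ and the explicit quadratic expression $c = \frac{1}{2}\, w^\mu \bigl(\mathcal{L}_{\vect{F}} g_{\mu\nu}\bigr) w^\nu$. Then I would apply Proposition \ref{prop:1} to $\vect{w}$: because $\lVert \vect{w}\rVert \equiv 1$, the function $\ln \lVert \vect{w}\rVert \equiv 0$ is trivially integrable over $X$ with respect to $\mu$, so Proposition \ref{prop:1} yields $\lambda = \langle c\rangle = \frac{1}{\mu(X)}\int_X \mathrm{d}\mu\, c(\vect{x})$.

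Finally I would substitute back $w^\mu = v^\mu/\lVert \vect{v}\rVert$ into the expression for $c$ given by Proposition \ref{prop:blim}, which turns it into $c = v^\mu \bigl(\mathcal{L}_{\vect{F}} g_{\mu\nu}\bigr) v^\nu / \bigl(2\lVert \vect{v}\rVert^2\bigr)$, and this immediately gives Eq.~(\ref{eq:LE}).

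The only delicate point is integrability bookkeeping: Proposition \ref{prop:1}, as applied above, produces the phase-space average $\langle c\rangle$ only once the Birkhoff time average of $c$ along almost every trajectory exists and equals it, i.e.\ once $c$ is $\mu$-integrable over $X$. This is precisely the requirement that the integrand on the right-hand side of Eq.~(\ref{eq:LE}) be integrable, which is implicit in the statement (and is automatic whenever $\lVert \vect{v}\rVert$, $\lVert \vect{v}\rVert^{-1}$ and $\mathcal{L}_{\vect{F}} g$ are bounded on $D$). Beyond this routine check there is no real obstacle, since the whole substance of the argument is already contained in Propositions \ref{prop:blim} and \ref{prop:1}.
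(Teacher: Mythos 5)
Your proposal is correct and follows essentially the same route as the paper's own proof: normalize $\vect{v}$ to $\vect{w}$, invoke Proposition \ref{prop:1} for $\lambda=\langle c\rangle$ and Proposition \ref{prop:blim} for the explicit quadratic formula for $c$, then substitute back. Your closing remark on the integrability of $c$ is a point the paper itself leaves implicit, so flagging it is a welcome (if minor) addition rather than a deviation.
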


\begin{proof}
The vector field $\vect w(\vect x)=\vect v(\vect x)/
\lVert \vect v(\vect x)\rVert$ is a CLF with LE $\lambda$ such that
$\lVert w(\vect x)\rVert =1$ for any $\vect x$. Hence, 
according to proposition \ref{prop:1}, there exists a scalar function $c$
such that the equation $\mathcal{L}_{\vect{F}} \vect{w} + c \vect w = 0$
holds on $D$ and $\lambda=\left<c\right>$. Furthermore, according to
proposition \ref{prop:blim}
\begin{equation}\label{eq:cscalar}
c= \frac 12
w^\mu \left(\mathcal{L}_{\vect{F}} g_{\mu\nu}\right) w^\nu
=\frac{v^\mu \left(\mathcal{L}_{\vect{F}} g_{\mu\nu}\right) v^\nu}
{2\lVert \vect{v}\rVert^2}\,,
\end{equation}
so Eq.~(\ref{eq:LE}) is obtained.
\end{proof}

Note that the expression of $\lambda$ given by Eq.~(\ref{eq:LE}) is
manifestly invariant under the transformation (\ref{eq:vgt}) on the CLF.
Hence this formula expresses the LE only as a function of the direction of
the corresponding one-dimensional subspace at each point of the domain $D$.

If $a(\vect{x})$ is a quadratic form, i.e.\ a symmetric covariant tensor
of order 2, we define its norm as
\begin{equation}\label{eq:norma}
\lVert a(\vect{x}) \rVert =\sqrt{a_{\mu\nu}(\vect{x}) a_{\mu'\nu'}(\vect{x})
g^{\mu\mu'}(\vect{x})g^{\nu\nu'}(\vect{x})}\,.
\end{equation}
It is then easy to see that for any vector $\vect v(\vect x)$
\begin{equation}\label{eq:alim}
\lvert v^\mu (\vect{x}) a_{\mu\nu}(\vect{x}) v^\nu (\vect{x})\rvert
\leq \lVert a(\vect{x}) \rVert \lVert v(\vect{x}) \rVert^2\,.
\end{equation}
For a CLF $\vect w$, such that 
$\lVert \vect{w}(\vect{x})\rVert=1 \ \forall\,\vect{x}\in D$,
we thus get from Eq.~(\ref{eq:blie}) 
\begin{equation}\label{eq:blim}
\lvert c(\vect{x})\rvert \leq \frac 12 \lVert 
\mathcal{L}_{\vect{F}} g(\vect{x}) \rVert\,,
\end{equation}
and from Eq.~(\ref{eq:LE}) we obtain the following upper bound for the
absolute value of any LE $\lambda$ of the system:
\begin{equation}\label{eq:upperLE}
   |\lambda|\leq  \frac {1}{2\mu(X)}
    \int_X \mathrm{d}\mu (\vect{x})\,
\lVert \mathcal{L}_{\vect{F}} g(\vect{x}) \rVert \, .
  \end{equation}

Note that the upper bound provided by the above equation depends only
on the field $\vect F$ defining the dynamical system and not on the CLF
associated with $\lambda$. It should however be remarked that the right-hand
side of the above equation exhibits a dependence on the choice of the
metric tensor $g$, which instead to a large extent does not affect the value
of the LEs. The above equation can therefore also be interpreted as a
condition that all metrics consistent with a given set of LEs $(\lambda_1, \dots,
\lambda_n)$ must satisfy. One can also write
\[
\lVert \mathcal{L}_{\vect{F}} g(\vect{x}) \rVert^2=
-\left(\mathcal{L}_{\vect{F}} g_{\mu\nu}\right)\left(\mathcal{L}_{\vect{F}} g^{\mu\nu}\right)
=2D_\mu F_\nu \left(D^\mu F^\nu + D^\nu F^\mu\right)\,,
\]
with
\[
\mathcal{L}_{\vect{F}} g^{\mu\nu}=F^\lambda \partial_\lambda g^{\mu\nu}
-g^{\mu\lambda}\partial_\lambda F^\nu - g^{\lambda\nu}\partial_\lambda F^\mu
=-D^\mu F^\nu - D^\nu F^\mu\,.
\]
Defining $L={\rm max}\{|\lambda_1|, \dots, |\lambda_n|\}$, the condition on the metric
expressed by Eq.~(\ref{eq:upperLE}) can then be written as
\[
\frac {1}{\mu(X)}\int_X \mathrm{d}\mu (\vect{x})\,
\sqrt{-\left(\mathcal{L}_{\vect{F}} g_{\mu\nu}\right)\left(\mathcal{L}_{\vect{F}} g^{\mu\nu}\right)}
\geq 2L \,.
\]

If we use a system of coordinates such that
$g_{\mu\nu}(\vect{x})=\delta_{\mu\nu}$, where $\delta_{\mu\nu}$ is
Kroenecker's symbol, then the norms of $\vect{v}(\vect{x})$ and
$a(\vect{x})$ take the more familiar forms
\begin{align}
\lVert \vect{v}(\vect{x})\rVert &= \sqrt{
\sum_{\mu=1}^n\left[v^\mu(\vect{x})\right]^2}\,, \label{eq:norme} \\
\lVert a(\vect{x})\rVert &= \sqrt{
\sum_{\mu=1}^n \sum_{\nu=1}^n
\left[a_{\mu\nu}(\vect{x})\right]^2}\,.
\end{align}
Since the definition of CLV is to a large extent
independent of the particular metric adopted on $X$, the simplest
choice is to use the euclidean metric
in a given system of coordinates, so that $g_{\mu\nu}(\vect{x})
=\delta_{\mu\nu}$ everywhere. In that case 
the Lie derivative of the metric tensor can simply be written as
\[
\mathcal{L}_{\vect{F}} \delta_{\mu\nu}=
\partial_\mu F^\nu + \partial_\nu F^\mu \,,
\]
so Eq.~(\ref{eq:blie}) becomes
\[
c=w^\mu \partial_\mu F^\nu w^\nu
\]
and can also be derived in an elementary way 
using Eq.~(\ref{eq:trajectw}). Furthermore,
Eqs.~(\ref{eq:LE}) and (\ref{eq:upperLE}) become respectively
\begin{align}
\lambda &=  \frac {1}{\mu(X)}
    \int_X \mathrm{d}\mu \,
\frac{v^\mu  \partial_\mu F^\nu v^\nu}
{\lVert \vect{v}\rVert^2} \, , \\
|\lambda| &\leq  \frac {1}{\mu(X)}
    \int_X \mathrm{d}\mu\,
\sqrt{\frac 12 \sum_{\mu=1}^n \sum_{\nu=1}^n \partial_\mu F^\nu
\left(\partial_\mu F^\nu + \partial_\nu F^\mu\right)}\, .
\end{align}

Thanks to the results obtained in this section
we can formulate a proposition showing that, when $X$
is compact, the result provided by proposition \ref{prop:alt:def}
can be obtained under simplified hypotheses on $\vect{v}$ and $b$.
We recall that,
for hamiltonian systems, $X$ can be identified with a level surface of
the hamiltonian function $H$, which is typically a compact set.

\begin{proposition}
  \label{prop:alt:def2}
Let $\mu$ be a positive measure on a compact Riemaniann manifold $X$ such
that $\mu(X)<+\infty$, and let the vector field $\vect{F}$ generate an
ergodic flow on $X$ which preserves the measure $\mu$.  Let $\vect{v}(\vect {x})$
and $b(\vect {x})$ be respectively a nonvanishing vector field and a scalar
function satisfying the equation
\[
 \mathcal{L}_{\vect{F}} \vect{v} + b \vect v = 0 
\]
on an invariant domain \mbox{$D\subseteq X$}, such that
$\mu(D)=\mu(X)$. Let us also suppose that 
$\lVert \vect{v}\left(\vect {x}\right) \rVert$ is differentiable on $D$. 
Then the function $c(\vect{x})$ defined
by Eq.~(\ref{eq:bb2})
is integrable over $X$ with respect to the measure $\mu$, and
$\vect{v}$ is a CLF with LE $\lambda=\left<c\right>$ on
an invariant domain $D'\subseteq D$ such that $\mu(D')=\mu(X)$. If, in
addition, $\ln \lVert \vect{v}\left(\vect {x}\right) \rVert$ is
integrable over $X$, then $\left<b\right>=\left<c\right>=\lambda$.
\end{proposition}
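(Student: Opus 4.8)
The plan is to deduce Proposition \ref{prop:alt:def2} from Proposition \ref{prop:alt:def} by showing that, when $X$ is compact, the one hypothesis of Proposition \ref{prop:alt:def} that does not appear here — integrability of the function $c$ of Eq.~(\ref{eq:bb2}) — is automatic. More precisely, I will show that on a compact $X$ the function $c$ is in fact \emph{bounded} on $D$; since $\mu(D)=\mu(X)<+\infty$, boundedness gives integrability, and then every hypothesis of Proposition \ref{prop:alt:def} is in force. The conclusion (that $\vect{v}$ is a CLF with LE $\lambda=\langle c\rangle$ on an invariant $D'\subseteq D$ with $\mu(D')=\mu(X)$, and the addendum $\langle b\rangle=\langle c\rangle=\lambda$ when $\ln\lVert\vect{v}\rVert$ is integrable) then transfers verbatim.

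To bound $c$, I would first pass to the normalized field $\vect{w}(\vect{x})=\vect{v}(\vect{x})/\lVert\vect{v}(\vect{x})\rVert$, which is well defined since $\vect{v}$ is nonvanishing, and differentiable along the trajectories of $\vect{F}$ since $\lVert\vect{v}\rVert$ is differentiable and strictly positive on $D$. Exactly as in the passage from Eq.~(\ref{eq:alt:def}) to Eq.~(\ref{eq:walt:def}), Lemma \ref{lemma0} shows that $\vect{w}$ satisfies $\mathcal{L}_{\vect{F}}\vect{w}+c\vect{w}=0$ along every trajectory of $D$, with $c$ given precisely by Eq.~(\ref{eq:bb2}). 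Next I would reproduce the short computation in the proof of Proposition \ref{prop:blim}: differentiating the identity $\lVert\vect{w}\rVert^{2}\equiv 1$ along a field line of $\vect{F}$ and applying the Leibniz rule to the Lie derivative yields $0=w^{\mu}(\mathcal{L}_{\vect{F}}g_{\mu\nu})w^{\nu}-2c$, hence $c=\frac{1}{2}\,w^{\mu}(\mathcal{L}_{\vect{F}}g_{\mu\nu})w^{\nu}$. (I cannot invoke Proposition \ref{prop:blim} itself, since its statement presupposes that $\vect{w}$ is a CLF — which is part of what we are proving — but its proof uses only that $\vect{w}$ has unit norm and solves $\mathcal{L}_{\vect{F}}\vect{w}+c\vect{w}=0$, both available here; this is the point that breaks the apparent circularity with Propositions \ref{prop:1} and \ref{prop:blim}.) Since $\vect{F}$ is a differentiable vector field on the Riemannian manifold $X$, the tensor $\mathcal{L}_{\vect{F}}g_{\mu\nu}=D_{\mu}F_{\nu}+D_{\nu}F_{\mu}$ is continuous, so $\lVert\mathcal{L}_{\vect{F}}g(\vect{x})\rVert$ attains a finite maximum on the compact $X$; by the Cauchy--Schwarz-type inequality (\ref{eq:alim}) together with $\lVert\vect{w}\rVert=1$ we obtain $|c(\vect{x})|\le\frac{1}{2}\lVert\mathcal{L}_{\vect{F}}g(\vect{x})\rVert$ on $D$, i.e.\ exactly the bound (\ref{eq:blim}), which is therefore bounded on $D$ and hence integrable over $X$.

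With $c$ integrable, I would simply invoke Proposition \ref{prop:alt:def}: its hypotheses — that $\vect{v}$ and $b$ solve $\mathcal{L}_{\vect{F}}\vect{v}+b\vect{v}=0$ on an invariant $D$ with $\mu(D)=\mu(X)$, that $c$ of Eq.~(\ref{eq:bb2}) is integrable over $X$, and that $\lVert\vect{v}\rVert$ is differentiable on $D$ — are now all satisfied. This gives at once that $\vect{v}$ is a CLF with LE $\lambda=\langle c\rangle$ on an invariant $D'\subseteq D$ with $\mu(D')=\mu(X)$, and the final implication ``$\ln\lVert\vect{v}\rVert$ integrable over $X$ $\Rightarrow$ $\langle b\rangle=\langle c\rangle=\lambda$'' is likewise part of Proposition \ref{prop:alt:def}, where it is obtained from Lemma \ref{lemma:lie}. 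The only nonroutine step is the reduction, and within it the quadratic identity for $c$; everything else is bookkeeping. I expect the main obstacle to be phrasing the borrowing of the Proposition \ref{prop:blim} computation cleanly, so that it is transparent that the identity $c=\frac{1}{2}w^{\mu}(\mathcal{L}_{\vect{F}}g_{\mu\nu})w^{\nu}$ is a purely differential consequence of $\mathcal{L}_{\vect{F}}\vect{w}+c\vect{w}=0$ and $\lVert\vect{w}\rVert\equiv 1$, and does not secretly use that $\vect{v}$ is already a CLF.
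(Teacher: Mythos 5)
Your proof is correct and follows essentially the same route as the paper's: normalize $\vect{v}$ to $\vect{w}$, derive the pointwise bound $|c|\le\tfrac12\lVert\mathcal{L}_{\vect F}g\rVert$, use compactness of $X$ to conclude that $c$ is bounded and hence integrable, and then invoke Proposition \ref{prop:alt:def}. The one place you improve on the paper is in noticing that Proposition \ref{prop:blim} cannot be cited verbatim (its statement presupposes that $\vect{w}$ is a CLF, which is part of what is being proved) and re-deriving the identity $c=\tfrac12\, w^\mu(\mathcal{L}_{\vect F}g_{\mu\nu})w^\nu$ from the differential equation and $\lVert\vect w\rVert\equiv 1$ alone; the paper invokes Proposition \ref{prop:blim} directly, which is slightly loose on exactly this point.
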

\begin{proof}
If $\vect{v}(\vect{x})$ satisfies Eq.~(\ref{eq:alt:def}), then
the vector field $\vect{w}(\vect{x})$ defined by Eq.~(\ref{eq:wnorm})
satisfies Eq.~(\ref{eq:walt:def}) with $c$ given by Eq.~(\ref{eq:bb2}).
Since $\lVert \vect{w}(\vect{x})\rVert=1\ \forall\,\vect{x}\in D$, 
by applying proposition \ref{prop:blim} one obtains Eq.~(\ref{eq:blim}).
The right-hand side of this equation is a continuous function
defined on the whole compact manifold $X$, and is therefore limited on $X$.
Hence $|c|$ is limited on $D$, and since $\mu(D)=\mu(X)<+\infty$,
from this it follows that $c$ is integrable over $D$. The thesis then
follows from proposition \ref{prop:alt:def}.
\end{proof}

\section{Geometrical interpretation}

Propositions \ref{prop:1} and \ref{prop:alt:def}
above only assume that $\vect{v}$ is
  continuous. As discussed above, the continuity is expected to be a
  common property, while $\vect{v}$ is likely to be differentiable only in
  special cases. It is however interesting to consider such special
  cases, because it is possible to give a geometrical interpretation
  of our alternative definition of CLF. First of all,
  Eq.~(\ref{eq:alt:def}) can be rewritten in terms of the commutator
and becomes Eq.~(\ref{eq:alt:def:commutator}).
We see that the commutator of $\vect{v}$ and $\vect{F}$ is a
linear combination of them (actually, just one of them,
$\vect{v}$). This property is called ``involutivity'' \cite{abraham}.
It is also well-known that, under very general hypotheses, 
the vector field $\vect{F}$ itself is a CLF with LE $\lambda=0$.
Since \mbox{$\mathcal{L}_{\vect{F}}\vect{F}=[\vect{F},\vect{F}]=0$},
this result can also be deduced from proposition \ref{prop:alt:def}
under the hypothesis that the function 
$\ln \lVert \vect{F}\left(\vect {x}\right) \rVert$ is
integrable over $X$.
\detail{involutive distribution, chapter 4.3, page 298; it is also
  called integrable distribution, definition 2.2.25, page 92;
  Frobenius theorem 2.2.26, page 93} The involutivity of the couple
$(\vect{v}$, $\vect{F})$ for every CLF $\vect{v}$ allows us to apply the
Frobenius theorem~\cite{abraham}\detail{ see integrable
  distribution, definition 2.2.25, page 92; Frobenius theorem 2.2.26,
  page 93}: given any CLF $\vect{v}$, the subbundle of the tangent
bundle spanned by $\vect{v}$ and $\vect{F}$ arises from a (local)
regular foliation. This concept is expressed by the following
proposition. 

\begin{proposition} \label{prop:geo:CLF:foliation}
  Let the vector field $\vect{F}(\vect{x})$ generate a flow on the
  Riemannian manifold $X$ according to Eq.~(\ref{eq:xp}), and let
  $\vect{v}(\vect{x})$ be a differentiable CLF, linearly independent
of $\vect F$, on an invariant domain $D\subseteq X$. 
Then the couple $(\vect{v}$, $\vect{F})$ generates a
  regular foliation of $D$. Each leave of the foliation contains whole
  trajectories.
\end{proposition}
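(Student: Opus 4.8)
The plan is to show that the rank-two plane field spanned by $\vect{v}$ and $\vect{F}$ on $D$ is involutive, invoke the Frobenius theorem to obtain the foliation, and then deduce the statement about trajectories from the fact that $\vect{F}$ is tangent to every leaf.

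First I would verify that $\Delta_{\vect{x}}:=\mathrm{span}\{\vect{v}(\vect{x}),\vect{F}(\vect{x})\}$ defines a differentiable distribution of constant rank $2$ on $D$: both fields are differentiable by hypothesis, and their assumed linear independence at each point of $D$ guarantees that neither vanishes and that $\dim\Delta_{\vect{x}}=2$ everywhere. Next I would prove that $\Delta$ is involutive. Because $D$ is invariant it is a union of whole trajectories of $\vect{F}$, so lemma \ref{lemma:eq_br} applies along every such trajectory and yields a scalar function $b$ on $D$ with $\mathcal{L}_{\vect{F}}\vect{v}+b\vect{v}=0$ throughout $D$. Since $\vect{v}$ is here differentiable with respect to every coordinate, this is exactly the commutator identity (\ref{eq:alt:def:commutator}), $[\vect{v},\vect{F}]=b\vect{v}$, which lies in $\Delta$; together with $[\vect{v},\vect{v}]=[\vect{F},\vect{F}]=0$ this shows that $\Delta$ is closed under the Lie bracket.

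Then I would apply the Frobenius theorem~\cite{abraham}: an involutive distribution of constant rank is completely integrable, so $D$ is partitioned into connected two-dimensional integral manifolds that in suitable local charts are the level sets of the transverse coordinates. This partition is the required regular foliation, and the leaf $L$ through a point $\vect{x}$ satisfies $T_{\vect{x}}L=\Delta_{\vect{x}}$, so in particular $\vect{F}(\vect{x})\in T_{\vect{x}}L$ for all $\vect{x}\in D$. To see that each leaf contains whole trajectories, fix $\vect{x}_0\in D$, let $L$ be its leaf and $\vect{x}(t)$, $-\infty<t<+\infty$, the trajectory with $\vect{x}(0)=\vect{x}_0$ (which remains in $D$ by invariance). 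Restricting $\vect{F}$ to the immersed submanifold $L$ gives a vector field whose integral curve through $\vect{x}_0$ lies in $L$ and, by uniqueness of integral curves, agrees with $\vect{x}(t)$ as long as $\vect{x}(t)\in L$. That this holds for all $t$ follows from a connectedness argument: in a Frobenius chart the transverse coordinates are constant along any curve tangent to $\Delta$, so the trajectory cannot leave the plaque it enters, whence $\{t:\vect{x}(t)\in L\}$ is nonempty, open and closed in $(-\infty,+\infty)$, hence all of it.

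The step I expect to be the main obstacle is this last, global one. Locally the tangency of $\vect{F}$ to the leaves trivially confines the trajectory to a single plaque, but upgrading this to a statement over the whole infinite time interval requires chaining Frobenius charts along the orbit and using both the connectedness of leaves and the invariance of $D$ (which alone ensures that the entire orbit, not merely an arc of it, lies in $D$). A subsidiary, essentially bookkeeping, point is to match the differentiability hypotheses to the regularity needed by the form of the Frobenius theorem one cites, and, if one does not wish to tacitly assume $\lambda$ nondegenerate, to recall that the commutator identity above was derived in lemma \ref{lemma:eq_br} under that assumption.
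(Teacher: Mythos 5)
Your proposal follows essentially the same route as the paper: involutivity of the span of $\vect{v}$ and $\vect{F}$ via the commutator identity $[\vect{v},\vect{F}]=b\vect{v}$ obtained from lemma \ref{lemma:eq_br}, then Frobenius to get the foliation, then tangency of $\vect{F}$ to the leaves to confine the orbits. The paper's own proof is just a two-sentence version of this; your added care on the global ``whole trajectory'' step and your remark that the commutator identity was derived under the nondegeneracy assumption are both correct refinements, not deviations.
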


\begin{proof}
  The existence of the regular foliation is ensured by the
  involutivity of the couple $(\vect{v}$, $\vect{F})$, thanks to 
  Frobenius theorem. Since one of the generators 
of the foliation is $\vect{F}$, the
  leaves contain whole orbits generated by $\vect{F}$.
\end{proof}

In the context of Anosov flows, \cite{benoist1992} it is usual to
identify the central stable and unstable manifolds; they are tangent
to all the CLF with negative and positive LE, respectively.  These
manifolds contain the leaves of the above-mentioned foliations
generated by each $\vect{v}$ and $\vect{F}$ and are known to be
of class $\mathcal{C}^{\infty}$.

We now want to show that it is possible to derive a result which
in some sense is the inverse of
that expressed by proposition \ref{prop:geo:CLF:foliation}. 

\begin{proposition} \label{prop:geo:foliation:CLF}
Let $\mu$ be a positive measure on the Riemaniann manifold $X$
such that $\mu(X)<+\infty$, and
let the vector field $\vect{F}$ generate, according to Eq.~(\ref{eq:xp}), 
an ergodic flow on $X$ which preserves the measure $\mu$.
Let $\vect{F}$ be a CLF with LE $\lambda=0$ and be such that
the function $\lVert\mathcal{L}_{\vect{F}} g(\vect{x})\rVert$
is integrable over $X$ with respect to the measure $\mu$.
Let also
  $\vect{v}_i(\vect{x})$, for $i=1, \dots, n-1$, 
be $n-1$ additional CLFs, on an invariant domain
  $D\subseteq X$, corresponding to nondegenerate LEs $\lambda_i$.
  If a 2-dimensional foliation $\Phi$ of $D$ is such that each leave contains whole
  trajectories of $\vect{F}$, then there exists one index $\bar i$,
with $1\leq \bar i\leq n-1$, such that the foliation is generated by 
the couple $(\vect{F},\vect{v}_{\bar i})$.
\end{proposition}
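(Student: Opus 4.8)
The plan is to show that the 2‑dimensional distribution $T\Phi$ tangent to the leaves of $\Phi$, which by hypothesis contains the direction of $\vect{F}$ at every point, must at each point $\vect{x}\in D$ also contain exactly one of the Oseledec directions spanned by the CLFs $\vect{v}_i(\vect{x})$, and that the index of this direction is locally — hence, by connectedness of a leaf, globally along the leaf — constant. First I would pick a point $\vect{x}_0\in D$ and a complement: since $\vect{F}$ is linearly independent of all the $\vect{v}_i$ at almost every point (the $\lambda_i$ are nondegenerate and, generically, nonzero, so their one‑dimensional Oseledec subspaces are distinct from $\mathrm{span}(\vect{F})$, which has exponent $0$), the vectors $\vect{F},\vect{v}_1,\dots,\vect{v}_{n-1}$ form a basis of $T_{\vect{x}_0}X$. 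The 2‑plane $T_{\vect{x}_0}\Phi$ contains $\vect{F}(\vect{x}_0)$, so it is $\mathrm{span}\bigl(\vect{F}(\vect{x}_0),\vect{u}(\vect{x}_0)\bigr)$ for some $\vect{u}(\vect{x}_0)=\sum_i \alpha_i(\vect{x}_0)\vect{v}_i(\vect{x}_0)$.

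The heart of the argument is to show that only one $\alpha_i$ can be nonzero. Here I would use the defining exponential behaviour of CLVs together with the fact that each leaf of $\Phi$ is $\vect{F}$‑invariant. Since $\Phi$ is a foliation whose leaves are saturated by trajectories, the linearized flow $\vect{\delta x}(t)$ of any tangent vector $\vect{u}(\vect{x}_0)\in T_{\vect{x}_0}\Phi$ stays in $T_{\vect{x}(t)}\Phi$ for all $t$. (This is the infinitesimal version of $\Phi_t$ mapping leaves to leaves; it follows because $\vect{F}$ is one of the generators and $T\Phi$ is involutive, so $\mathcal{L}_{\vect{F}}$ preserves the module of sections of $T\Phi$.) Now if $\vect{u}(\vect{x}_0)=\sum_i\alpha_i\vect{v}_i(\vect{x}_0)$ with two distinct indices, say $\alpha_p,\alpha_q\neq0$ and $\lambda_p>\lambda_q$, then, by Definition~\ref{def:clv} and the invariance of CLVs under the linearized flow, $\vect{\delta x}(t)=\sum_i \alpha_i(t)\vect{v}_i[\vect{x}(t)]$ where each component grows like $e^{\lambda_i t+o(t)}$; hence $\vect{\delta x}(t)/\lVert\vect{\delta x}(t)\rVert$ converges as $t\to+\infty$ to the normalized $\vect{v}_p[\vect{x}(t)]$‑direction, and as $t\to-\infty$ to the normalized $\vect{v}_q[\vect{x}(t)]$‑direction. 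But $\vect{F}[\vect{x}(t)]$ is a CLF with exponent $0$, so it cannot absorb these two distinct limiting directions within the fixed 2‑plane $T_{\vect{x}(t)}\Phi$: a 2‑plane containing $\vect{F}[\vect{x}(t)]$ can contain at most one Oseledec line transverse to $\vect{F}$. More precisely, $\vect{v}_p[\vect{x}(t)]$ and $\vect{v}_q[\vect{x}(t)]$ together with $\vect{F}[\vect{x}(t)]$ are linearly independent at almost every point, so $T_{\vect{x}(t)}\Phi\cap\mathrm{span}(\vect{v}_p,\vect{v}_q,\vect{F})$ cannot contain both $\vect{v}_p$‑ and $\vect{v}_q$‑directions; this forces $\alpha_p\alpha_q=0$, a contradiction. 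Therefore at (almost) every point exactly one $\alpha_i$ is nonzero, i.e.\ $\vect{v}_{\bar i}(\vect{x})\in T_{\vect{x}}\Phi$.

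Finally I would promote this pointwise statement to the global one. The map $\vect{x}\mapsto\bar i(\vect{x})$ is locally constant: the Oseledec lines $\mathbb{R}\vect{v}_i(\vect{x})$ vary continuously (the $\vect{v}_i$ are CLFs, hence continuous vector fields), $T\Phi$ varies continuously, and exactly one of the finitely many lines lies in $T_{\vect{x}}\Phi$, so a neighbouring point has the same index. Since each leaf of $\Phi$ is connected and contains whole trajectories, and the flow is ergodic so that a generic trajectory is dense in $D$, the index $\bar i$ is constant on $D$ (more carefully: it is constant on the open dense set where $\vect{F},\vect{v}_1,\dots,\vect{v}_{n-1}$ are independent, and by continuity of the distributions it extends to all of $D$). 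With $\bar i$ this fixed index, we have $\vect{F}(\vect{x}),\vect{v}_{\bar i}(\vect{x})\in T_{\vect{x}}\Phi$ for all $\vect{x}\in D$; since both are nonvanishing and linearly independent and $\dim T_{\vect{x}}\Phi=2$, they span $T_{\vect{x}}\Phi$, i.e.\ the foliation $\Phi$ is generated by the couple $(\vect{F},\vect{v}_{\bar i})$, as claimed.

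\medskip
The main obstacle I anticipate is making rigorous the claim that the linearized flow of a vector tangent to a leaf stays tangent to the leaf, and the accompanying ``almost every point'' bookkeeping — in particular handling the null set where $\vect{F}$ and some $\vect{v}_i$ fail to be independent (e.g.\ if some $\lambda_i=0$), where one must argue that the generic behaviour nonetheless pins down $\bar i$ and then invoke continuity of the CLFs and of $T\Phi$ to conclude on all of $D$. The exponential‑separation step itself is essentially a restatement of Definition~\ref{def:clv} and of the invariance of CLVs under the linearized flow, so it should go through cleanly once the geometric set‑up is in place.
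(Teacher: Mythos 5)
Your strategy is genuinely different from the paper's: instead of writing the unit generator $\vect{W}$ of the line field $T\Phi\cap\mathrm{span}(\vect{v}_1,\dots,\vect{v}_{n-1})$ as a solution of $\mathcal{L}_{\vect{F}}\vect{W}+\beta\vect{W}=0$ (obtained from involutivity) and then invoking propositions \ref{prop:blim} and \ref{prop:alt:def} to conclude that $\vect{W}$ is a CLF, you try to read off the single-index property directly from the asymptotics of the linearized flow. The geometric setup is sound: the leaves are flow-invariant, so the linearized flow carries $T_{\vect{x}_0}\Phi$ to $T_{\vect{x}(t)}\Phi$ and preserves the intersection line with the span of the $\vect{v}_i$, hence $\vect{\delta x}(t)$ stays proportional to that continuous line field. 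But the step that is supposed to deliver the contradiction is a non sequitur as written. You argue that if $\alpha_p,\alpha_q\neq 0$ then $\vect{\delta x}(t)/\lVert\vect{\delta x}(t)\rVert$ aligns with the $\vect{v}_p$-direction forward and the $\vect{v}_q$-direction backward, and that ``a 2-plane containing $\vect{F}$ can contain at most one Oseledec line transverse to $\vect{F}$,'' forcing $\alpha_p\alpha_q=0$. At no finite time, however, does $\vect{\delta x}(t)$ lie exactly in an Oseledec line, and the 2-plane $T_{\vect{x}(t)}\Phi$ moves with $t$; asymptotic alignment of a vector inside a moving plane with a moving Oseledec direction contradicts nothing pointwise. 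What you actually need is that the forward and backward exponents of $\vect{u}(\vect{x}_0)$ coincide, i.e.\ that $\vect{u}(\vect{x}_0)$ is a CLV --- and that is precisely the thing to be proved.

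The gap can be closed, but not for free. One repair is Poincar\'e recurrence: since $\vect{\delta x}(t)$ is proportional to the continuous field $\vect{u}[\vect{x}(t)]$ and the trajectory returns infinitely often to a neighbourhood of $\vect{x}_0$ where the $\vect{v}_i$ are uniformly transverse and $\vect{u}$ keeps a fixed nonzero angle to $\vect{v}_p$ (because $\alpha_q(\vect{x}_0)\neq 0$), the asserted forward alignment with $\vect{v}_p$ is impossible; this must be spelled out. The other repair is the paper's route: derive $\mathcal{L}_{\vect{F}}\vect{W}+\beta\vect{W}=0$ from Frobenius/involutivity, bound $|\beta|\leq\frac12\lVert\mathcal{L}_{\vect{F}}g\rVert$ via proposition \ref{prop:blim}, and apply proposition \ref{prop:alt:def} to conclude $\vect{W}$ is a CLF, which then forces a single nonzero coefficient. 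The fact that your argument never uses the hypothesis that $\lVert\mathcal{L}_{\vect{F}}g\rVert$ is integrable --- exactly the hypothesis that legitimizes the paper's appeal to proposition \ref{prop:alt:def} --- is a symptom of the missing step. (By contrast, the tangency claim you flag as your ``main obstacle'' is the easy part: it follows immediately from the leaves being unions of trajectories.)
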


\begin{proof}
At each point $\vect{x}\in D$ of a 2-dimensional leave 
of $\Phi$ the vector $\vect{F}(\vect{x})$ is tangent to the leave, 
since the leave contains whole trajectories. We can then take,
as a  basis of the tangent space of the leave, the vector
$\vect{F}(\vect{x})$ and a vector $\vect{W}(\vect{x})$ which can be
  expressed as a linear combination of the $n-1$ CLVs 
$\vect{v}_1(\vect{x})$, $\dots$,  $\vect{v}_{n-1}(\vect{x})$.
If we also impose the condition
$\lVert \vect{W}(\vect{x})\rVert =1$, then the vector $\vect{W}
(\vect{x})$ is univocally determined (up to the sign)
at each $\vect{x}\in X$, and we
obtain in this way a vector field on $D$:
  \begin{equation} \label{eq:geo:def:W}
    \vect{W}(\vect{x}) = \sum_{i=1}^{n-1} c_i (\vect{x})
\vect{v}_i (\vect{x})\,.
  \end{equation}
  
Since $\Phi$ is a 2-dimensional foliation, 
according to Frobenius theorem $\vect{F}$ and $\vect{W}$ must be
  two involutive vector fields, thus the Lie derivative
  $\mathcal{L}_{\vect{F}}\vect{W}$ must be a linear combination of
  $\vect{F}$ and $\vect{W}$. This means that
  \begin{equation}\label{eq:lie_fw}
    \mathcal{L}_{\vect{F}}\vect{W} = \alpha \vect{F} - \beta \vect{W}\,,
  \end{equation}
  where $\alpha$ and $\beta$ are two scalar fields, and
using Eq.~(\ref{eq:geo:def:W}) to express $\vect{W}$ we get
  \begin{equation}\label{eq:prop_geom}
    \sum_{i=1}^{n-1} \mathcal{L}_{\vect{F}}(c_i \vect{v}_i) =
    \alpha \vect{F} + \beta \sum_{i=1}^{n-1} c_i \vect{v}_i\,.
  \end{equation}
It follows from proposition \ref{prop:1} that for any $i=1, \dots, n-1$
there exists a scalar function $b_i$ such that the equation
 \begin{equation}
    \mathcal{L}_{\vect{F}} \vect{v}_i + b_i \vect v_i = 0 
    \label{eq:alt:defi}
  \end{equation}  
holds on $D$. From Eq.~(\ref{eq:prop_geom}) we then get
  \begin{equation}\label{eq:foliation_c}
\alpha \vect{F} -\sum_{i=1}^{n-1}\left[\mathcal{L}_{\vect{F}}c_i-
(b_i-\beta)c_i \right] \vect{v}_i = 0    \,.
  \end{equation}
Since the set of vectors $\left(\vect{F},\vect{v}_1, \dots, 
\vect{v}_{n-1}\right)$ is
linearly independent at all points $\vect{x}\in D$, 
from the above equation we get that for all $\vect{x} \in D$
\begin{align}
\alpha(\vect{x})&=0 \,, \label{eq:alpha}\\
 \mathcal{L}_{\vect{F}}c_i(\vect{x})&=\left[b_i(\vect{x})-
\beta(\vect{x})\right]c_i(\vect{x})\ \forall\,i=1,\dots, n-1\,. \label{eq:coeffci}
\end{align}
According to Eq.~(\ref{eq:alpha}) we can rewrite Eq.~(\ref{eq:lie_fw}) as
 \begin{equation}\label{eq:lie_fw1}
    \mathcal{L}_{\vect{F}}\vect{W} +\beta \vect{W}=0\,,
  \end{equation}
which has the same form as Eq.~(\ref{eq:alt:def}) with $b=\beta$.
Since $\lVert \vect{W}(\vect{x})\rVert =1$, by applying proposition 
\ref{prop:blim} we obtain
\begin{equation}
\lvert \beta(\vect{x})\rvert \leq \frac 12 \lVert 
\mathcal{L}_{\vect{F}} g(\vect{x}) \rVert
\ \forall\, \vect{x}\in X\,.
\end{equation}
Since by hypothesis the function on the right-hand side
is integrable over $X$, the above equation
implies that also $\beta$ is integrable. We can then apply proposition
\ref{prop:alt:def} and deduce from Eq.~(\ref{eq:lie_fw1}) that
$\vect{W}$ is a CLF with LE $\lambda=\left<\beta\right>$. But since 
$\left(\vect{v}_1, \dots, \vect{v}_{n-1}\right)$ is a
set of nondegenerate CLFs, it follows from Eq.~(\ref{eq:geo:def:W})
that there must be only one index $\bar i$,
with $1\leq \bar i\leq n-1$, such that the function $c_{\bar i} (\vect{x})$
is not identically 0. Hence 
\begin{equation} \label{eq:foliation_w}
    \vect{W}(\vect{x}) = c_{\bar i} (\vect{x})
\vect{v}_{\bar i} (\vect{x})
  \end{equation}
and $\left<\beta\right>=\lambda_{\bar i}$. 

Since by construction the couple $(\vect{F},\vect{W})$ generates the
foliation $\Phi$, it follows from Eq.~(\ref{eq:foliation_w}) that
also the couple $(\vect{F},\vect{v}_{\bar i})$ generates $\Phi$.
\end{proof}

It is worth remarking that,
since $c_{\bar i} (\vect{x})\neq 0\,\forall\,\vect x\in D$, for $i=\bar i$
Eq.~(\ref{eq:coeffci}) provides
 \begin{equation}\label{eq:foliation_ci}
  \mathcal{L}_{\vect{F}}\ln |c_{\bar i}|=b_{\bar i}-\beta\,.
  \end{equation}
Since $\lVert \vect{W}(\vect{x})\rVert =1$,
Eq.~(\ref{eq:foliation_w}) implies $\ln|c_{\bar i}(\vect{x})|=
-\ln \lVert \vect{v}_{\bar i}(\vect{x})\rVert$. Therefore, 
if the function $\ln \lVert \vect{v}_{\bar i}(\vect{x})\rVert$ is integrable 
over $X$, then the same is true for the function $\ln|c_{\bar i}(\vect{x})|$.
According to proposition \ref{prop:1}, in that case
$\langle b_{\bar i}\rangle=\lambda_{\bar i}$, so 
$\langle b_{\bar i}-\beta\rangle=0$. From
Eq.~(\ref{eq:foliation_ci}) we thus get
\[
\int_X \mathrm{d}\mu(\vect{x}) \mathcal{L}_{\vect{F}}
\ln |c_{\bar i}(\vect{x})|=0\,,
\]
as required by lemma \ref{lemma:lie}.

If the space $X$ is compact, proposition \ref{prop:geo:foliation:CLF}
assumes the following simpler form.

\begin{proposition} \label{prop:geo:foliation:CLF2}
Let $\mu$ be a positive measure on the compact Riemaniann manifold $X$
such that $\mu(X)<+\infty$, and
let the vector field $\vect{F}$ generate, according to Eq.~(\ref{eq:xp}), 
an ergodic flow on $X$ which preserves the measure $\mu$.
Let $F$ be a CLF with LE $\lambda=0$ and let
  $\vect{v}_i(\vect{x})$, for $i=1, \dots, n-1$, 
be $n-1$ additional CLFs, on an invariant domain
  $D\subseteq X$, corresponding to nondegenerate LEs $\lambda_i$.
  If a 2-dimensional foliation $\Phi$ of $D$ is such that each leave contains whole
  trajectories of $\vect{F}$, then there exists one index $\bar i$,
with $1\leq \bar i\leq n-1$, such that the foliation is generated by 
the couple $(\vect{F},\vect{v}_{\bar i})$.
\end{proposition}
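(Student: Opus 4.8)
The plan is to observe that Proposition \ref{prop:geo:foliation:CLF2} is simply Proposition \ref{prop:geo:foliation:CLF} specialized to a compact $X$, and that this compactness makes automatic the one hypothesis of the latter that is not assumed here. Concretely, Proposition \ref{prop:geo:foliation:CLF} requires that the function $\lVert \mathcal{L}_{\vect{F}} g(\vect{x}) \rVert$ be integrable over $X$ with respect to $\mu$, whereas in the present statement this is omitted. So the entire content of the proof reduces to showing that, on a compact manifold, this integrability holds for free.

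First I would recall that $\mathcal{L}_{\vect{F}} g$ is a symmetric covariant tensor of order $2$ with components $F^\lambda \partial_\lambda g_{\mu\nu} + g_{\mu\lambda}\partial_\nu F^\lambda + g_{\lambda\nu}\partial_\mu F^\lambda$; since $\vect{F}$ is a differentiable vector field and $g$ is the metric of the Riemannian manifold $X$, these components are continuous functions of $\vect{x}$. Hence the scalar function $\lVert \mathcal{L}_{\vect{F}} g(\vect{x}) \rVert$, defined by Eq.~(\ref{eq:norma}), is continuous on $X$. Being continuous on a compact manifold, it is bounded; and since $\mu(X)<+\infty$, any bounded measurable function is $\mu$-integrable over $X$. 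Thus the integrability of $\lVert \mathcal{L}_{\vect{F}} g \rVert$ required by Proposition \ref{prop:geo:foliation:CLF} is satisfied, and that proposition applies verbatim, yielding the existence of an index $\bar i$ with $1\leq \bar i\leq n-1$ such that $\Phi$ is generated by the couple $(\vect{F},\vect{v}_{\bar i})$.

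Alternatively — and perhaps in the spirit of keeping the compact case self-contained — one could reproduce the proof of Proposition \ref{prop:geo:foliation:CLF} line by line, replacing the appeal to Proposition \ref{prop:alt:def} by an appeal to its compact-space version, Proposition \ref{prop:alt:def2}. That is: take a basis $(\vect{F},\vect{W})$ of the tangent plane of a leaf with $\lVert\vect{W}\rVert=1$, write $\vect{W}=\sum_{i=1}^{n-1} c_i \vect{v}_i$, use Frobenius involutivity to get $\mathcal{L}_{\vect{F}}\vect{W}+\beta\vect{W}=0$, and then invoke Proposition \ref{prop:alt:def2} directly (no integrability hypothesis needed on a compact $X$) to conclude that $\vect{W}$ is a CLF; nondegeneracy of the $\vect{v}_i$ then forces exactly one coefficient $c_{\bar i}$ to be nonzero, so $(\vect{F},\vect{v}_{\bar i})$ generates $\Phi$.

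I do not anticipate any genuine obstacle here: the substantive work is all carried by Propositions \ref{prop:geo:foliation:CLF} (or \ref{prop:alt:def2}) and \ref{prop:blim}, and the compactness step is the routine "continuous on a compact set is bounded, and bounded on a finite measure space is integrable". The only point to keep in mind is the standing assumption, used throughout the paper, that $\vect{F}$ is differentiable: this is what makes $\mathcal{L}_{\vect{F}} g$ continuous and what legitimizes the Lie derivatives appearing in the argument.
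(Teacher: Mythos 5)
Your proposal is correct and its main route is exactly the paper's own proof: continuity of $\lVert \mathcal{L}_{\vect{F}} g(\vect{x}) \rVert$ on the compact manifold gives boundedness, hence $\mu$-integrability since $\mu(X)<+\infty$, and Proposition \ref{prop:geo:foliation:CLF} then applies directly. The alternative self-contained route via Proposition \ref{prop:alt:def2} is a reasonable variant but adds nothing beyond the reduction already made.
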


\begin{proof}
Since the function $\lVert \mathcal{L}_{\vect{F}} g(\vect{x}) \rVert$ is
continuous on the compact space $X$, it is limited on $X$ and
therefore, since $\mu(X)<+\infty$, integrable over $X$. The thesis
then follows from proposition \ref{prop:geo:foliation:CLF}.
\end{proof}

The two propositions \ref{prop:geo:CLF:foliation} and
\ref{prop:geo:foliation:CLF} show that the regular 
2-dimensional foliations of the
space that contain whole trajectories are just those foliations
which are
generated by one of the CLFs and $\vect{F}$. Each foliation can thus
be univocally associated with one of the non-degenerate LEs and one of
the CLFs. We suggest that the relevance
of LEs and covariant Lyapunov vectors in various fields of physics and
mathematics arises to a large extent from their connection with such
foliations, which represent an underlying fundamental geometrical
structure that characterizes any dynamical system.
 
\section{An example of application of the differential equation for the CLFs}
\label{sect:hyperbolic}

In this section we provide an example of application of our results,
showing that Eq.~(\ref{eq:alt:def:commutator}) allows us to find the CLVs
of a given flow.

We analyse the same Anosov flow considered for generating
Fig.~\ref{fig:hyperbolic}.  It is a case of the Hadamard-Gutzwiller model,
namely a geodesic flow on a
genus-2 hyperbolic surface of constant negative curvature~\cite{braun2002}. 
We start from the Poincar\'e disc, i.e.\ the unit disc $|z|\leq 1$ in
the complex plane endowed with the metric:
\begin{equation}\label{eq:pmetric}
  \mathrm{d}s = \frac{2}{1-\left|z\right|^2} |\mathrm{d}z|
\end{equation}

As dynamic variables, we consider the complex coordinate $z$ of the point
and the angle $\vartheta$ between the tangent to the geodesic and
the real axis:
\begin{equation}
  \vect{x} = \left[ \begin{array}{c} z \\ \vartheta \end{array} \right]
\end{equation}
The evolution equation of the coordinates along a geodesic 
is given by Eq.~(\ref{eq:xp}) with
\begin{equation}
  \vect{F} = \left[ \begin{array}{c}
  \frac{1-\left|z\right|^2}{2} e^{i \vartheta}
  \\
  \Im{\left(z e^{-i \vartheta}\right)}
  \end{array} \right]
\end{equation}

The Poincar\'e disc is not compact in the metric (\ref{eq:pmetric}). 
We make the domain compact by
cutting a regular hyperbolic octagon inside the disc and by identifying
its opposite sides. In the resulting manifold, which has genus 2 and preserves the
constant negative curvature, the geodesic flow is ergodic and
mixing. Details on this operation can be found e.g. in
Ref.~\cite{braun2002}.

A straightforward calculation shows that the vector field
\begin{equation}
  \vect{v_+} = \left[ \begin{array}{c}
      i \frac{1-\left|z\right|^2}{2} e^{i \vartheta}
      \\
      1 - \Re{\left(z e^{-i \vartheta}\right)}
  \end{array} \right]
\end{equation}
satisfies Eq.~(\ref{eq:alt:def:commutator}) with $b=1$. 
It can be checked that this vector field also matches
the continuity conditions on the sides of the regular
octagon. According to proposition \ref{prop:alt:def}, we can conclude
from this that $\vect{v_+}$ is a CLF with LE $\lambda =1$.
Indeed, this vector field actually corresponds to the CLF shown in
Fig.~\ref{fig:hyperbolic}.
A second solution of Eq.~(\ref{eq:alt:def:commutator}) is
\begin{equation}
  \vect{v_-} = \left[ \begin{array}{c}
      -i \frac{1-\left|z\right|^2}{2} e^{i \vartheta}
      \\
      1 + \Re{\left(z e^{-i \vartheta}\right)}
  \end{array} \right]
\end{equation}
with $b=-1$, hence $\vect{v_-}$ is a CLF
associated to the LE $\lambda=-1$. The third LE, $\lambda=0$,
is trivially associated to $\vect{v_0}=\vect{F}$.

It can be noticed that, in this case, the normalization chosen for the
CLV is such that $b$ results constant and equal to the LE $\lambda$
everywhere.

\section{Conclusion}

The concept of CLF, which we have introduced in the present paper,
sheds new light on the mathematical meaning of CLVs and on
their role for the characterization of a dynamical system. The
definition of Lyapunov vectors has historically been based on 
the asymptotic behaviour of their norm according
to the tangent dynamics of the system. On the other hand,
when CLVs are associated with a CLF,
they become the global solutions of a differential equation, 
Eq.~(\ref{eq:alt:def}). We have proved that
this remarkable property actually provides the
possibility of a new definition of the concept of CLV. This
also leads to a possible new definition of
LE, since this parameter can be considered as the average value of
the scalar function $b$ appearing in Eq.~(\ref{eq:alt:def}).

These new definitions of CLV and LE have the property that,
unlike the traditional ones, they do not rely upon the concept
of norm. The fact that the choice of the metric on the phase space
does not affect the
value of LEs and the direction of CLVs has been known for a
long time, but thanks to our new definition this feature gains
an immediate evidence. The fact that the norm of the CLVs
is undetermined is reflected in an interesting 
property of Eq.~(\ref{eq:alt:def}) which we have called ``gauge
invariance'', owing to its formal similarity to a well-known invariance
property of quantum field theories. For CLFs
which are normalized with respect to a given norm, the differential 
equation takes a special nonlinear form from which an explicit upper bound
for the absolute value of any LE can be derived.
Finally, the fact that
only global solutions of Eq.~(\ref{eq:alt:def}) represent CLFs
is  obviously related with the global
meaning of the Lyapunov exponents.

The above-mentioned results only require the continuity of CLF
over the phase space, together with their differentiability along
field lines. 
Under the additional hypothesis that CLFs are  differentiable 
along any direction
almost everywhere on the phase space, we
have proved that each CLF is in involution with the generator
$\vect{F}$ of the evolution of the dynamical system. This property
allows us to suggest a geometrical interpretation of the CLFs, 
based on Frobenius theorem. According to this interpretation,
for each dynamical system there is a set of 
2-dimensional foliations, such
that each leave contains whole trajectories. Each leave is
generated by $\vect{F}$ and one of the CLFs and is therefore
characterized by one of the Lyapunov exponents of the system.

We have provided an explicit example of application of our results
to the geodesic flow in the Hadamard-Gutzwiller model. It would be
interesting to pursue this investigation by searching for
other models in which the validity of the differential
equation for the CLFs can be directly verified, 
either analytically or numerically. Of course, our results are based
on a set of mathematical assumptions which we have specified in the
hypotheses of the various propositions that we have proved in this paper.
We are aware of the fact that these assumptions may not be valid for
certain classes of dynamical systems. As we have shown by means of a
numerical study of the H\'enon-Heiles system, there can be surfaces
in the phase space where continuity of the CLFs fails, and such surfaces must therefore
be excluded from the domain in which the differential equation for the CLFs can be
applied. Another requirements is the existence of a conserved measure, which
appears to be necessary in order to express the LE as the average on the phase space
of the scalar function $b$ appearing in the differential equation for the CLFs.
Finally, we have assumed that the phase space is finite-dimensional, and that
one can define  a metric on it such that all CLVs have finite norm.
Further investigations might be helpful in order to establish whether our results
can be at least partially extended to situations which do not fulfill all of the hypotheses
that we have assumed to hold in the present paper.

\section{Conflict of interest}
The authors have no conflicts of interest to declare. 

\section{Data availability statement}
This paper reports analytic work. Numerical results are only
reported for illustration purposes and are available
upon request.

\section{Ethics statement}
The research reported in the paper does not involve topics
raising ethical concerns.

\section{Acknowledgments}
This research received no external funding.

\bibliographystyle{unsrt}
\bibliography{lyapunov}

\end{document}